
\documentclass[12pt,a4paper]{article}

\usepackage{amssymb}
\usepackage{amsfonts}
\usepackage{graphicx}
\usepackage{amsmath}
\usepackage{latexsym}
\usepackage{amssymb}
\usepackage{pstricks}
\usepackage{mathrsfs}
\usepackage{fancyhdr}
\usepackage{amsthm}

%

 \newtheorem{thm}{Theorem}
 \newtheorem{cor}[thm]{Corollary}
 \newtheorem{prop}[thm]{Proposition}






\newtheorem{theorem}{Theorem}
\newtheorem{lem}[theorem]{Lemma}
\newtheorem{rem}[theorem]{Remark}

\addtolength{\hoffset}{-0.50cm} \addtolength{\textwidth}{1.0cm}
\addtolength{\voffset}{-1.0cm} \addtolength{\textheight}{2cm}


\newcommand{\om}{\Omega}
\newcommand{\pom}{\partial\Omega}

\newcommand{\iom}{\int_\om}

\newcommand{\bfC}{\mbox{\boldmath{$C$}}}

\newcommand{\bfx}{\mbox{\boldmath{$x$}}}
\newcommand{\bfn}{\mbox{\boldmath{$n$}}}
\newcommand{\bff}{\mbox{\boldmath{$f$}}}
\newcommand{\bfF}{\mbox{\boldmath{$F$}}}

\newcommand{\bfg}{\mbox{\boldmath{$g$}}}

\newcommand{\bfz}{\mbox{\boldmath{$z$}}}

\newcommand{\bfu}{\mbox{\boldmath{$u$}}}
\newcommand{\bfw}{\mbox{\boldmath{$w$}}}
\newcommand{\bfv}{\mbox{\boldmath{$v$}}}

\newcommand{\bfe}{\mbox{\boldmath{$e$}}}

\newcommand{\bfalpha}{\mbox{\boldmath{$\alpha$}}}

\newcommand{\bfdelta}{\mbox{\boldmath{$\delta$}}}

%
%

\begin{document}

%
%
%
%
%
%
%
%
%

\title
 {A Note on Regularity and Uniqueness of Natural
 Convection with Effects of Viscous
 Dissipation in 3D Open Channels}

\author{Michal Bene\v{s}}
\date{\small Department of Mathematics,
\\
Faculty of Civil Engineering,
\\
Czech Technical University in Prague,
\\
Th\'{a}kurova 7, 166 29 Prague 6, Czech Republic
\\
email: benes@mat.fsv.cvut.cz}%

\maketitle

\paragraph{Abstract}

We prove the existence
of unique regular solutions of steady-state buoyancy-driven flows
of viscous incompressible heat-conducting fluids
in 3D open channels under mixed boundary conditions.
The model takes into account
the phenomena of the viscous energy dissipation.

\section{Introduction}
\label{sec:1}
Let $\om$ be a bounded domain in $\mathbb{R}^3$ with
boundary $\partial \Omega$, $\Gamma_D$ and
$\Gamma_N$ are ${C}^{\infty}$-smooth open
disjoint subsets of $\pom$ such that $\pom =
\overline{\Gamma}_D\cup\overline{\Gamma}_N$,
$\Gamma_D\neq\emptyset$, $\Gamma_N\neq\emptyset$, $\mathcal{M} =
\partial\om-(\Gamma_D\cup\Gamma_N) = \overline{\Gamma}_D\cap\overline{\Gamma}_N =
\bigcup_{j\in \mathcal{J}} \mathcal{M}_i$, $\mathcal{J} =
\left\{1,\dots,d\right\}$, and the $2$--dimensional measure of
$\mathcal{M}$ is zero and $\mathcal{M}_i$ are smooth nonintersecting curves {(this
means that $\mathcal{M}_i$ are smooth curved nonintersecting edges and vertices
(conical points) on
$\partial \Omega$ are excluded)}. Moreover, all portions
of $\Gamma_N$ are taken to be flat
and $\Gamma_D$ and $\Gamma_N$ form an angle
$\omega_{\mathcal{M}}=\pi/2$ at all points of $\mathcal{M}$ (in the
sense of tangential planes).
{ In a physical sense, $\Omega$ represents a
``truncated'' region of an unbounded channel system occupied by a
moving heat-conducting viscous incompressible fluid. $\Gamma_D$
will denote the ``lateral'' surface and $\Gamma_N$ represents the
open parts of the channel $\Omega$.
In addition, we assume that in/outflow channel
segments extend as straight pipes (see Figure \ref{channel}).
}

The strong formulation of our problem reads as follows:
\begin{align}
\varrho_0(\bfu\cdot\nabla)\bfu - \nu\Delta \bfu + \nabla P
&=
\varrho(\theta)\bfg
&& \textmd{in}\;\om,
\label{nse}
\\
\nabla \cdot (\varrho_0\bfu) &= 0
&& \textmd{in}\;\om,
\label{eq:cont}
\\
c_V \varrho(\theta) \bfu\cdot\nabla\theta  - \lambda \Delta \theta
&=
\alpha_1\nu \bfe(\bfu):\bfe(\bfu)
&& \textmd{in}\;\om,
\label{heat_eq}
\\
\bfu &= {\bf 0} &&\textmd{on}\;\Gamma_{D},
\label{bc_dirichlet_u}
\\
\theta &= \theta_D &&\textmd{on}\;\Gamma_{D},
\label{bc_dirichlet_theta}
\\
-P\bfn + \nu (\nabla \bfu) \; \bfn &= {\bf0}
&&\textmd{on}\;\Gamma_{N},
\label{bc_do_nothing}
\\
\nabla\theta \cdot \bfn &= 0
&&\textmd{on}\;\Gamma_{N}.
\label{bc_neumann}
\end{align}
Equations \eqref{nse}--\eqref{heat_eq} represent the
balance equations for linear
momentum, mass and internal energy of the homogeneous fluid
and the
system \eqref{nse}--\eqref{bc_neumann} describes
stationary buoyancy-driven flows of viscous
incompressible heat-conducting fluids with dissipative
heating in the open channel $\Omega$.
\begin{figure}[h]
  \includegraphics[angle=0,width=6cm]{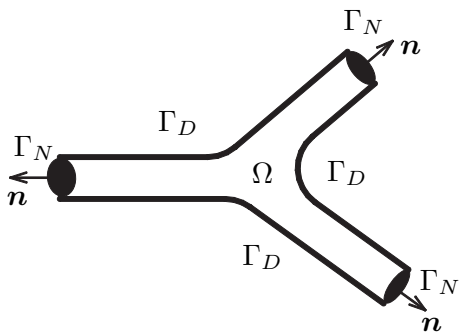}
\caption{$\Omega$ represents a
``truncated'' region of an unbounded channel system occupied by a fluid.}
\label{channel}
\end{figure}
In the model, $\bfu=(u_1,u_2,u_3)$, $P$ and $\theta$ denote the unknown
velocity, pressure and temperature, respectively. Tensor
$\bfe(\bfu)$ denotes the symmetric part of the velocity gradient
$
\bfe(\bfu)=[\nabla\bfu+(\nabla\bfu)^{\top}]/2.
$
$\bfn$ denotes the
unit outward normal with respect to $\Omega$ along $\partial
\Omega$.
Data of the problem are as follows: $\bfg$ is a body force
and  $\theta_D$ is a given function representing the
prescribed distribution of the
temperature $\theta$ on $\Gamma_D$.
Positive constant material coefficients represent
the kinematic viscosity $\nu$, reference density $\varrho_0$, heat conductivity
$\lambda$ and the specific heat at constant volume $c_V$.
Since most of the fluids, especially liquids,
are slightly compressible,
we consider the fluid to be ``mechanically incompressible'',
yet ``thermally expansible''.
{ Following the well-known Boussinesq approximation,
the temperature dependent density
is used in the energy equation \eqref{heat_eq}
and to compute the buoyancy force $\varrho(\theta)\bfg$
on the right-hand side of equation \eqref{nse}. Everywhere else in the model,
$\varrho$ is replaced by the reference value $\varrho_0$.}
Change of
density $\varrho$ with temperature is given by
strictly positive, nonincreasing and continuous function satisfying
\begin{equation}\label{con:rho}
0 <   \varrho(\xi) \leq \varrho^{\sharp} < +\infty \quad \forall \xi \in \mathbb{R}
\quad (\varrho^{\sharp}={\rm const > 0}).
\end{equation}

Coefficient $\alpha_1$ reflects the dissipation effect, which
is omitted frequently in many mathematical models
\cite{lorca1996,Rajagopal1996,Rocha2003,Roa2006}.
Otherwise, taking into account the dissipative term
$\alpha_1\nu \bfe(\bfu):\bfe(\bfu)$
with quadratic growth of the gradient,
the equations \eqref{nse}--\eqref{bc_neumann} represent
the elliptic system with strong nonlinearities without
appropriate general existence and regularity theory.
In \cite{frehse}, Frehse presented an example of a discontinuous
bounded weak solution  of a nonlinear elliptic system
of the type $\Delta \bfu = \bfF(\bfu,\nabla\bfu)$, where $\bfF$
is smooth and has quadratic growth in $\nabla\bfu$.

{
It is matter of discussion which boundary condition
should be prescribed on the outlets of the channel system. The
boundary condition (\ref{bc_do_nothing}), introduced originally
in \cite{glowinski,Gresho,Hey,SaniGresho}  and which is often called the ``do nothing''
boundary condition, results from a variational principle and has
been proven to be convenient in numerical modeling of
parallel flows \cite{Hey,liu}.
The ``do nothing'' boundary condition is often used
in the computational simulation of blood flow in the human vascular system
(see e.g.  \cite{GaRaRoTu}).
Let us mention some other interesting problems.
Because of the ``do nothing'' boundary condition
\eqref{bc_do_nothing},
some uncontrolled ``backward flow'' can take place
at the outlets of the channel
and we are not able to prove an ``a priori'' estimate for the convective
terms in the system  (cf. \cite{KraNeu5}).
Consequently, only local solutions can be proven and
the question of whether a
given solution is unique
is, to date, open (even for ``small data'').
This makes the present problem quite different than in the
case of Dirichlet-type boundary conditions for $\bfu$ on the whole boundary
frequently studied in the literature
(cf. \cite{charki,lorca1996,Paloka2009,Paloka2009b,Rocha2003,Roa2006}).
}

In \cite{BenesKucera2012a}, the authors proved
the existence of the local weak solutions
of the system \eqref{nse}--\eqref{bc_neumann}
(with constant density $\varrho$)
in a 3D open cylindrical channel with the prescribed ``free surface'' boundary condition
$
-P\bfn+\nu[\nabla\bfu+(\nabla\bfu)^{\top}]\bfn={\bf0}
$
on the output of the channel. In \cite{Benes2012b},
the author proved the $W^{2,8/7}$-regularity for the velocity and
temperature of the problem in 2D Lipschitz domains and
various types of boundary conditions.
%
In this work we provide an existence proof of
$W^{2,s}$-regular solutions of steady-state buoyancy-driven flows in 3D
open channels modeled by the problem \eqref{nse}--\eqref{bc_neumann},
which does not require small data for $\theta$ on $\Gamma_D$
described by the function $\theta_D$. Finally, the uniqueness of the solution
is discussed in this paper.

{
The paper is organized as follows.
In Section
\ref{sec:preliminaries}, we introduce basic notations and some
appropriate function spaces in order to precisely formulate our
problem and prove some auxiliary Lemmas which will be used
in the proof of the main result.
In Section \ref{sec:main_result}, we formulate the problem in a
variational setting under the framework of free divergence functional spaces
and establish the main result of our work. The main advantage
of the formulation of the Navier-Stokes system in free divergence spaces
is the elimination of the pressure $P$ to consider only the couple $\bfu$ and $\theta$
as the primary unknowns of the coupled system.
The main result is proved in Section~\ref{sec:proof_main}.
In Subsection \ref{sec:main_result_existence},
we prove the existence of the solution
introducing iterative scheme to uncouple the system.
Let us briefly describe the rough idea of the proof.
Introducing the translated function $\vartheta_0=\theta - \theta_D$
we solve the corresponding problem with homogeneous boundary conditions.
For given temperature, say $\vartheta_0$, in the buoyancy term
on the right hand side in \eqref{nse} we find $\bfu$,
the solution of the decoupled Navier-Stokes equations \eqref{nse}--\eqref{eq:cont}
with mixed boundary conditions \eqref{bc_dirichlet_u}--\eqref{bc_do_nothing}
via the Banach contraction principle. Now with $\bfu$ in hand
we modify \eqref{heat_eq}
substituting $\vartheta_0$ and $\bfu$ into convective and dissipative terms and
 find $\vartheta$, the solution of the linearized heat (Poisson) equation
with the mixed boundary conditions \eqref{bc_dirichlet_theta} and \eqref{bc_neumann}.
Finally we show that the map $\vartheta_0 \rightarrow \vartheta$
is completely continuous and maps some ball into itself.
Hence the existence of at least one solution follows from the Leray-Schauder theorem.
In Subsection \ref{sec:main_result_uniqueness}, the uniqueness of the solution
is established under the assumption of Lipschitz
continuity of $\varrho$  .
}

\section{Preliminaries}
\label{sec:preliminaries}
Throughout the paper, we will always use positive constants $c$,
$c_1$, $c_2$, $\dots$, which are not specified and which may differ
from line to line, however, do not depend on
the functions under consideration.

Let
\begin{eqnarray*}
{\bfC}_{\sigma,D}^{\infty}
&:=&
\left\{\bfu\in C^\infty(\overline{\Omega})^3; \,
\textmd{div}\,\bfu = 0, \, {\textmd{supp}\, \bfu}  \cap \Gamma_D =
\emptyset  \right\},
\\
C_D^{\infty}
&:=&
\left\{\theta \in
C^\infty(\overline{\Omega}); \, {\textmd{supp}\, \theta }  \cap
\Gamma_D = \emptyset\right\}
\end{eqnarray*}
and $\mathbf{V}_{\sigma,D}^{k,p}$ be the closure
of ${\bfC}_{\sigma,D}^{\infty}$ in the
norm of $W^{k,p}(\om)^3$, $k\ge 0$ and $1\leq p \leq \infty$.
Similarly, let $V_D^{k,p}$ be a closure of
$C_D^{\infty}$ in the norm of $W^{k,p}(\om)$. Then
$\mathbf{V}_{\sigma,D}^{k,p}$ and $V_D^{k,p}$, respectively, are
Banach spaces with the norms of the spaces $W^{k,p}(\om)^3$ and
$W^{k,p}(\om)$, respectively.

We
suppose that $r,s \in \mathbb{R}$
are fixed numbers throughout the paper such that
$s \in [4/3,s_0)$, $ s_0 = 3 + \epsilon$
($\epsilon>0$ sufficiently small) and
\begin{equation}\label{s_r}
  \left\{
\begin{array}{lcl}
r \in \left[6/5;\frac{3s}{2(3-s)}\right] & {\rm for} & s \in [4/3,3),
\\
r \in [6/5;+\infty) & {\rm for} &  s \in [3,s_0)
\end{array} \right.
\end{equation}
(the value $s_0$ will be clarified later).

To simplify mathematical formulations we introduce the following
notations:
\begin{align}
a(\bfu,\bfv) & := \nu\iom \nabla\bfu : \nabla\bfv \,{\rm d}\Omega,  
\label{form_a}
\\
b(\bfu,\bfv,\bfw)& := \iom \varrho_0(\bfu\cdot\nabla)\bfv \cdot \bfw \,{\rm d}\Omega,
\label{form_b}
\\
\kappa(\theta,\varphi) & :=  \lambda\iom \nabla \theta \cdot \nabla
\varphi  \,{\rm d}\Omega,
\label{form_c}
\\
d(\vartheta,\bfu,\theta,\varphi)& := c_V  \iom \varrho(\vartheta)\bfu\cdot\nabla\theta
 \,  \varphi  \,{\rm d}\Omega,
 \label{form_d}
\\
e(\bfu,\bfv,\varphi) & :=  \alpha_1\nu \iom  \bfe(\bfu):\bfe(\bfv) \varphi
 \,{\rm d}\Omega,
\label{form_e}
 \\
(\bfu,\bfv) &:=  \iom \bfu \cdot \bfv  \,{\rm d}\Omega,
 \label{scalar_Lu}
\\
(\theta,\varphi)_{\Omega} & :=  \iom \theta \varphi  \,{\rm d}\Omega.
\label{scalar_Lt}
\end{align}
In \eqref{form_a}--\eqref{scalar_Lt} all functions
$\bfu,\bfv,\bfw,\theta,\vartheta,\varphi$ are smooth enough, such that all
integrals on the right-hand sides make sense.

Further, define the spaces
\begin{equation}\label{D_a}
\mathbf{D}_{a}^s := \left\{\bfu \; |\; \bff \in \mathbf{V}_{\sigma,D}^{0,s},\,
a(\bfu,\bfv) =(\bff,\bfv)  \textmd{ for all } \bfv\in
\mathbf{V}_{\sigma,D}^{1,2} \right\}
\end{equation}
and
\begin{equation}\label{D_kappa}
{D}^r_{\kappa} :=  \left\{\theta \; |\; f \in
V_D^{0,r}, \, \kappa(\theta,\varphi)
=(f,\varphi)_{\Omega} \textmd{ for all } \varphi \in
V_D^{1,2} \right\},
\end{equation}
equipped with the norms
\begin{equation}\label{def_norms}
\|\bfu\|_{\mathbf{D}_{a}^s} := \|\bff\|_{\mathbf{V}_{\sigma,D}^{0,s}}
\quad\textmd{ and }\quad
\|\theta\|_{D^r_{\theta}} := \|f\|_{V_D^{0,r}},
\end{equation}
where $\bfu$ and $\bff$ are corresponding functions via \eqref{D_a}
and $\theta$ and $f$ are corresponding functions via
\eqref{D_kappa}.




\begin{lem}\label{emb_D}
For given $s,r \in \mathbb{R}$ satisfying \eqref{s_r}
the following embeddings hold:
\begin{align}
\mathbf{D}^s_a &\hookrightarrow \mathbf{W}^{2,s},
&
\|\bfu\|_{\mathbf{W}^{2,s}} \leqslant c(\nu,\Omega)\|\bfu\|_{\mathbf{D}_{a}^s}
&
\quad \forall \bfu \in \mathbf{D}^s_a,
\label{emd_u}
\\
D^r_{\kappa}  &\hookrightarrow {W}^{2,r}(\Omega),
&
\|\theta\|_{{W}^{2,r}(\Omega)} \leqslant c(\lambda,\Omega)\|\theta\|_{D^r_{\kappa}}
&
\quad \forall \theta \in D^r_{\kappa}.
\label{emd_theta}
\end{align}
\end{lem}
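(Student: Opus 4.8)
The plan is to read both embeddings as $L^p$-regularity statements for the underlying linear elliptic boundary value problems. For $\bfu\in\mathbf{D}^s_a$ the defining identity $a(\bfu,\bfv)=(\bff,\bfv)$ for all $\bfv\in\mathbf{V}_{\sigma,D}^{1,2}$ means, after restoring the pressure by the de Rham/Ne\v{c}as argument (the residual $-\nu\Delta\bfu-\bff$ annihilates all solenoidal test fields, hence equals $\nabla P$ for some $P$), that $(\bfu,P)$ solves the Stokes system $-\nu\Delta\bfu+\nabla P=\bff$, $\operatorname{div}\bfu=0$ weakly, with $\bfu=\bfzero$ on $\Gamma_D$ and the do-nothing condition $-P\bfn+\nu(\nabla\bfu)\bfn=\bfzero$ on $\Gamma_N$; for $\theta\in D^r_\kappa$ it means $\theta$ weakly solves $-\lambda\Delta\theta=f$ with $\theta=0$ on $\Gamma_D$ and $\nabla\theta\cdot\bfn=0$ on $\Gamma_N$. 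Thus both assertions reduce to the implication that an $L^p$ right-hand side forces the solution into $W^{2,p}$, with constants depending only on $\nu$ (resp.\ $\lambda$) and $\om$. First I would fix a finite cover subordinate to a partition of unity and decompose each solution into pieces supported (i) in the interior of $\om$, (ii) near a smooth point of $\Gamma_D$ or of $\Gamma_N$ that stays away from $\mathcal{M}$, and (iii) near the curved edges $\mathcal{M}_i$ where the type of boundary condition changes.

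For the pieces of type (i) and (ii) the bound is classical. Interior regularity and regularity up to a smooth Dirichlet, or up to a smooth Neumann/do-nothing, portion of the boundary follow from the Agmon--Douglis--Nirenberg $L^p$-theory for the scalar Laplacian and for the Stokes system, after flattening the boundary and freezing coefficients; for the velocity one checks the Shapiro--Lopatinskii (complementing) condition for the do-nothing pair and controls the pressure and the divergence constraint by the Cattabriga--Solonnikov $L^p$ estimates. Each such piece obeys a local $\|\cdot\|_{W^{2,p}}$ bound by the $L^p$ norm of the data plus lower-order terms, the latter absorbed through the norms \eqref{def_norms} and an interpolation/compactness argument.

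The crux is the family of pieces of type (iii). Here I would straighten each edge $\mathcal{M}_i$ together with its two adjacent faces, reducing to a model problem on a dihedral wedge of opening $\omega_{\mathcal{M}}=\pi/2$ carrying a Dirichlet condition on one face and a Neumann (resp.\ do-nothing) condition on the other, and apply the Kondrat'ev--Maz'ya--Rossmann regularity theory for elliptic systems in domains with edges (cf.\ \cite{Benes2012b} for the two-dimensional analogue and \cite{KraNeu5} for the do-nothing setting). The local $W^{2,p}$ membership near the edge is dictated by the spectrum of the transverse operator pencil, i.e.\ by the singular exponents $\lambda$ of the two-dimensional mixed problem on the sector of angle $\pi/2$, the requirement being $\lambda>2-2/p$ so that the model second derivatives $r^{\lambda-2}$ are $L^p$-integrable against the transverse area element $r\,dr\,d\phi$. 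The decisive point is that for this angle the scalar mixed Dirichlet--Neumann pencil has exponents $\lambda_k=(2k+1)\pi/(2\omega_{\mathcal{M}})=2k+1$, all odd integers, whose ``singular'' eigenfunctions are in fact harmonic polynomials and therefore produce no genuine singularity; this removes any upper obstruction for $\theta$ and leaves the admissible range of $r$ to be fixed only by the coupling. For the velocity the corresponding do-nothing pencil must be analysed separately, and its first non-trivial exponent $\lambda_1$ determines the threshold through $\lambda_1=2-2/s_0$, which is the announced value $s_0=3+\epsilon$.

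I expect this last pencil computation to be the main obstacle: unlike the scalar Laplacian the do-nothing pencil couples the velocity components with the pressure, so one must locate its lowest eigenvalue explicitly and verify $\lambda_1>2-2/s$ for every $s<s_0$, which is precisely what certifies $\bfu\in\mathbf{W}^{2,s}$; the hypotheses $\omega_{\mathcal{M}}=\pi/2$ and the exclusion of conical vertices are exactly what make this spectrum tractable and free of exponents in the forbidden band. Once the three families of local estimates are assembled, summing over the partition of unity and invoking the definitions \eqref{def_norms} (the equivalence of $\|\bfu\|_{\mathbf{D}^s_a}$ with $\|\bff\|_{\mathbf{V}_{\sigma,D}^{0,s}}$, and of $\|\theta\|_{D^r_\kappa}$ with $\|f\|_{V_D^{0,r}}$) delivers the global bounds \eqref{emd_u}--\eqref{emd_theta}.
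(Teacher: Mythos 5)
Your proposal follows essentially the same route as the paper: the paper's proof likewise reduces the lemma to $L^p$-regularity for the mixed Stokes and Poisson problems (after first securing weak $V^{1,2}$-solvability via Lax--Milgram and $L^{6/5}\hookrightarrow (W^{1,2})^*$, which is where the restriction $r,s\ge 6/5$ enters), and its Appendix carries out exactly your edge analysis---cut-off near $\mathcal{M}$, reduction to the right-angle wedge, Fourier--Mellin transform, and study of the resulting operator pencil---quoting Maz'ya--Rossmann for the weighted $W^{2,p}$ estimates, Kufner--S\"andig for the scalar problem (no upper restriction on $r$, matching your harmonic-polynomial observation), and prior work plus a numerical root of the transcendental equation for the coupled velocity--pressure pencil, whose first nontrivial eigenvalue $z_0\approx 1.3523$ yields $s_0=2/(2-\Re z_0)\approx 3.0879$, i.e.\ precisely your relation $\lambda_1=2-2/s_0$. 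The only cosmetic difference is that you make the partition of unity and the Agmon--Douglis--Nirenberg treatment of the interior and smooth-boundary pieces explicit, whereas the paper subsumes these in the cited regularity theorems.
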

{
\begin{proof}
It is well known that for every $\bff \in (\mathbf{V}_{\sigma,D}^{1,2})^*$ there
exists the uniquely determined $\bfu \in \mathbf{V}_{\sigma,D}^{1,2}$ such
that $a(\bfu,\bfv)   = \langle\bff ,
\bfv \rangle$ for every $\bfv \in \mathbf{V}_{\sigma,D}^{1,2}$
and
$$
\|\bfu\|_{\mathbf{V}_{\sigma,D}^{1,2}}
\leqslant
c(\nu,\Omega)\|\bff\|_{(\mathbf{V}_{\sigma,D}^{1,2})^*}.
$$

Similarly, for every $f \in (V_D^{1,2})^*$ there
exists the uniquely determined $\theta \in V_D^{1,2}$ such
that $\kappa(\theta,\varphi) = \langle f,
\varphi \rangle$ for every $\varphi \in V_D^{1,2}$
and
$$
\|\theta\|_{V_D^{1,2}} \leqslant c(\lambda,\Omega)\|f\|_{(V_D^{1,2})^*}.
$$
Using known embedding for Sobolev spaces
$W^{1,2}(\Omega) \hookrightarrow L^6(\Omega)$
 we have
$L^{6/5}(\Omega)\hookrightarrow W^{1,2}(\Omega)^*$
and therefore we deduce
$\mathbf{V}_{\sigma,D}^{0,s}\hookrightarrow (\mathbf{V}_{\sigma,D}^{1,2})^*$
and
$V_D^{0,r}\hookrightarrow (V_D^{1,2})^*$ (with $r$ and $s$ satisfying \eqref{s_r})
and finally
$\mathbf{D}_{a}^s \hookrightarrow \mathbf{V}_{\sigma,D}^{1,2}$
and
$D^r_{\theta}\hookrightarrow V_D^{1,2}$.
Now higher smoothness of $\bfu$ and $\theta$ together with the estimates
\eqref{emd_u}--\eqref{emd_theta}
follow from the regularity results
for the Poisson equation and the Stokes system in $\Omega$ with the
mixed boundary conditions.
The proof is rather technical and therefore
postponed to Appendix \ref{app:regularity}.
\end{proof}
}


\begin{rem}
Throughout the paper $\varepsilon$ denotes a sufficiently small positive real number.
\end{rem}

\begin{lem}\label{lem:sup_estimates}
There exist numbers $C_b$, $C_d$ and $C_e$ such that for every
 $\bfu$, $\bfv \in D^s_a$ and $\theta\in W^{2-\varepsilon,r}(\Omega)$ we have
\begin{eqnarray}
\|b(\bfu,\bfv,\cdot)\|_{\mathbf{V}_{\sigma,D}^{0,s}}
&\leqslant&
\varrho_0
C_b(\nu,\Omega)
\|\bfu\|_{\mathbf{D}^s_a}  \|\bfv\|_{\mathbf{D}^s_a},
\label{est_b}
\\
\| d(\theta,\bfu,\theta,\cdot) \|_{V_{D}^{0,r}}
&\leqslant&
c_V \varrho^{\sharp}
C_d(\nu,\Omega)
\|\bfu\|_{\mathbf{D}^s_a}
\|\theta\|_{W^{2-\varepsilon,r}(\Omega)},
\label{est_d}
\\
\|  e(\bfu,\bfv,\cdot) \|_{V_{D}^{0,r}}
&\leqslant&
\alpha_1
\nu
C_e(\nu,\Omega)
\|\bfu\|_{\mathbf{D}^s_a}  \|\bfv\|_{\mathbf{D}^s_a}.
\label{est_e}
\end{eqnarray}
\end{lem}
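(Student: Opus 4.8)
The plan is to reduce each of the three bounds to a pointwise Hölder estimate on the integrand of the corresponding form and then to invoke the regularity embeddings of Lemma~\ref{emb_D}. For a fixed pair $\bfu,\bfv$ the map $\bfw\mapsto b(\bfu,\bfv,\bfw)$ is a linear functional on $\mathbf{V}_{\sigma,D}^{1,2}$; writing $b(\bfu,\bfv,\bfw)=(\bfF,\bfw)$ with $\bfF$ the Helmholtz--Leray projection of $\varrho_0(\bfu\cdot\nabla)\bfv$ onto the divergence-free fields, one has $\bfF\in\mathbf{V}_{\sigma,D}^{0,s}$ and $\|b(\bfu,\bfv,\cdot)\|_{\mathbf{V}_{\sigma,D}^{0,s}}=\|\bfF\|_{L^s}\le c(\Omega)\,\varrho_0\,\|(\bfu\cdot\nabla)\bfv\|_{L^s}$, the last inequality being the $L^s$-boundedness of the projector. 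For $d$ and $e$ no projection is needed, since $V_D^{0,r}$ carries no divergence constraint, so the representatives are simply $g=c_V\varrho(\theta)\,\bfu\cdot\nabla\theta$ and $h=\alpha_1\nu\,\bfe(\bfu):\bfe(\bfv)$, and it suffices to bound $\|g\|_{L^r}$ and $\|h\|_{L^r}$. Throughout I will replace $\|\bfu\|_{\mathbf{W}^{2,s}}$ by $\|\bfu\|_{\mathbf{D}^s_a}$ using \eqref{emd_u}, which also explains the dependence of $C_b,C_d,C_e$ on $\nu$ and $\Omega$.

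For \eqref{est_b} I would split, in the generic range $s\in[4/3,3)$, by Hölder as $\|(\bfu\cdot\nabla)\bfv\|_{L^s}\le\|\bfu\|_{L^3}\,\|\nabla\bfv\|_{L^{3s/(3-s)}}$. The first factor is controlled by $\mathbf{W}^{2,s}\hookrightarrow L^3$, valid because $s\ge4/3>1$, and the second by $\nabla\colon\mathbf{W}^{2,s}\to\mathbf{W}^{1,s}\hookrightarrow L^{3s/(3-s)}$; together with Lemma~\ref{emb_D} this yields the claimed bound $\varrho_0 C_b\|\bfu\|_{\mathbf{D}^s_a}\|\bfv\|_{\mathbf{D}^s_a}$. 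For $s\in[3,s_0)$ the estimate is easier, since then $\mathbf{W}^{2,s}\hookrightarrow L^\infty$ and one may use $\|(\bfu\cdot\nabla)\bfv\|_{L^s}\le\|\bfu\|_{L^\infty}\|\nabla\bfv\|_{L^s}$.

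For \eqref{est_d} I use $0<\varrho(\theta)\le\varrho^\sharp$ from \eqref{con:rho} to get $|g|\le c_V\varrho^\sharp|\bfu|\,|\nabla\theta|$ pointwise, and then estimate $\|\bfu\,\nabla\theta\|_{L^r}\le\|\bfu\|_{L^a}\|\nabla\theta\|_{L^b}$ with $1/a+1/b=1/r$. Since $\theta\in W^{2-\varepsilon,r}$ only, the factor $\nabla\theta$ lives in the fractional space $W^{1-\varepsilon,r}$, and I would read off $b$ from the fractional Sobolev embedding $W^{1-\varepsilon,r}\hookrightarrow L^b$, choosing $a$ accordingly (e.g.\ $a=3/(1-\varepsilon)$, $b=3r/(3-(1-\varepsilon)r)$ when $(1-\varepsilon)r<3$, and $\bfu\in L^\infty$ otherwise); one checks that $\mathbf{W}^{2,s}\hookrightarrow L^a$ holds for all $s\ge4/3$ and $r$ in the range \eqref{s_r}. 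Invoking \eqref{emd_u} and the norm $\|\theta\|_{W^{2-\varepsilon,r}}$ then gives \eqref{est_d}. The only delicate point here is the use of the fractional embedding, forced by the fact that the temperature is available with two derivatives only up to the loss $\varepsilon$.

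The estimate \eqref{est_e} is the crux and, I expect, the main obstacle: both factors $\bfe(\bfu),\bfe(\bfv)$ are symmetric gradients lying only in $\mathbf{W}^{1,s}$, the least regular situation of the three. Writing $\|\bfe(\bfu):\bfe(\bfv)\|_{L^r}\le\|\bfe(\bfu)\|_{L^{q}}\|\bfe(\bfv)\|_{L^{q}}$ with $q=3s/(3-s)$ (the Sobolev exponent of $\mathbf{W}^{1,s}\hookrightarrow L^q$ for $s<3$), the requirement $2/q\le1/r$, so that the product -- a priori in $L^{q/2}$ -- lies in $L^r$, is exactly $r\le 3s/(2(3-s))$, precisely the upper bound imposed in \eqref{s_r}. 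Thus the admissible range of $r$ is dictated by this single estimate; for $s\ge3$ one instead has $\mathbf{W}^{1,s}\hookrightarrow L^q$ for every finite $q$, matching the unbounded range of $r$ in \eqref{s_r}. Applying \eqref{emd_u} to both factors yields $\alpha_1\nu C_e\|\bfu\|_{\mathbf{D}^s_a}\|\bfv\|_{\mathbf{D}^s_a}$. Besides this sharp exponent bookkeeping, the one structural point deserving care is the representation step for $b$, namely the $L^s$-boundedness of the Leray projection under the present mixed boundary conditions, which I would either cite or reduce to the regularity theory already underlying Lemma~\ref{emb_D}.
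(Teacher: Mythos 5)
Your proposal is correct, and its skeleton is the same as the paper's: identify each trilinear form with its natural $L^p$ representative, apply H\"older, then convert everything to $\|\cdot\|_{\mathbf{D}^s_a}$ and $\|\cdot\|_{W^{2-\varepsilon,r}}$ via Lemma \ref{emb_D}. The differences are in the decomposition, and they are worth recording. The paper splits symmetrically: $\|(\bfu\cdot\nabla)\bfv\|_{L^s}\le\|\bfu\|_{L^{2s}}\|\nabla\bfv\|_{L^{2s}}$ for \eqref{est_b}, $\|\bfu\cdot\nabla\theta\|_{L^r}\le\|\bfu\|_{L^{2r}}\|\nabla\theta\|_{L^{2r}}$ for \eqref{est_d}, and $\|\nabla\bfu\|_{L^{2r}}\|\nabla\bfv\|_{L^{2r}}$ for \eqref{est_e}. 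For \eqref{est_e} this is exactly your estimate (the requirement $2r\le 3s/(3-s)$ being the constraint in \eqref{s_r}, as you observe), but for \eqref{est_b} the symmetric split needs $W^{1,s}\hookrightarrow L^{2s}$, which in three dimensions holds only for $s\ge 3/2$, and for \eqref{est_d} it needs $W^{1-\varepsilon,r}\hookrightarrow L^{2r}$, which holds only for $r\ge 3/(2(1-\varepsilon))$; both fail at the lower ends $s\in[4/3,3/2)$ and $r\in[6/5,3/2)$ of the admissible range \eqref{s_r}. Your asymmetric splits --- $L^3\times L^{3s/(3-s)}$ for the convective term and $L^{3/(1-\varepsilon)}\times L^{3r/(3-(1-\varepsilon)r)}$ for the thermal convection term --- are valid on the whole range, so on this point your bookkeeping is sharper than the paper's. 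Finally, concerning your one structural worry: the $L^s$-boundedness of the Helmholtz--Leray projection under mixed boundary conditions is nowhere proved (or used) in the paper, and you can avoid it entirely. The paper simply measures the non-solenoidal representative $\varrho_0(\bfu\cdot\nabla)\bfv$ in $L^s$, and this is all that is needed downstream, because the regularity results of Appendix \ref{app:regularity} give $W^{2,s}$-bounds for the Stokes problem with an arbitrary right-hand side in $L^s(\Omega)^3$, solenoidal or not; so $\|b(\bfu,\bfv,\cdot)\|_{\mathbf{V}_{\sigma,D}^{0,s}}$ should be read as shorthand for the $L^s$-norm of that representative rather than as the norm of a projected field.
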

\begin{proof}
By H\"{o}lder inequality, Sobolev embeddings
(see e.g. \cite{AdamsFournier1992,KufFucJoh1977,lionsmagenes})
and using \eqref{s_r}, \eqref{emd_u} and \eqref{emd_theta}
we arrive at the estimates
\begin{eqnarray*}
\|b(\bfu,\bfv,\cdot)\|_{\mathbf{V}_{\sigma,D}^{0,s}}
&\leqslant&
\varrho_0
\|\bfu\|_{\mathbf{V}_{\sigma,D}^{0,2s}}
\|\nabla\bfv\|_{\mathbf{V}_{\sigma,D}^{0,2s}}
\\
&\leqslant&
\varrho_0
c(\Omega)
\|\bfu\|_{\mathbf{V}_{\sigma,D}^{2,s}}
\|\bfv\|_{\mathbf{V}_{\sigma,D}^{2,s}}
\\
&\leqslant&
\varrho_0
C_b(\nu,\Omega)
\|\bfu\|_{\mathbf{D}^s_a}  \|\bfv\|_{\mathbf{D}^s_a}.
\end{eqnarray*}
Further,
\begin{eqnarray*}
\| d(\theta,\bfu,\theta,\cdot) \|_{V_{D}^{0,r}}
&\leqslant&
c_V \varrho^{\sharp}
\|\bfu\|_{\mathbf{V}_{\sigma,D}^{0,2r}}
\|\nabla\theta\|_{L^{2r}(\Omega)^3}
\\
&\leqslant&
c_V \varrho^{\sharp}
c(\Omega)
\|\bfu\|_{\mathbf{V}_{\sigma,D}^{2,s}}
\|\theta\|_{W^{2-\varepsilon,r}(\Omega)}
\\
&\leqslant&
c_V \varrho^{\sharp}
C_d(\nu,\Omega)
\|\bfu\|_{\mathbf{D}^s_a}
\|\theta\|_{W^{2-\varepsilon,r}(\Omega)}
\end{eqnarray*}
and finally
\begin{eqnarray*}
\|  e(\bfu,\bfv,\cdot) \|_{V_{D}^{0,r}}
&\leqslant&
\alpha_1
\nu
\|\nabla\bfu\|_{\mathbf{V}_{\sigma,D}^{0,2r}}
\|\nabla\bfv\|_{\mathbf{V}_{\sigma,D}^{0,2r}}
\\
&\leqslant&
\alpha_1
\nu
c(\Omega)
\|\bfu\|_{\mathbf{V}_{\sigma,D}^{2,s}}
\|\bfv\|_{\mathbf{V}_{\sigma,D}^{2,s}}
\nonumber
\\
&\leqslant&
\alpha_1
\nu
C_e(\nu,\Omega)
\|\bfu\|_{\mathbf{D}^s_a}  \|\bfv\|_{\mathbf{D}^s_a}.
\end{eqnarray*}
\end{proof}

\begin{lem}\label{continuity_d}
Let $\bfu \in D^s_a$,
$\theta \in W^{2-\varepsilon,r}(\Omega)$
and
$\theta_n$
be a sequence in $W^{2-\varepsilon,r}(\Omega)$
such that $\theta_n \rightarrow \theta$ in $W^{2-\varepsilon,r}(\Omega)$.
Then
\begin{equation}\label{conv_d}
\|d(\theta_n,\bfu,\theta,\cdot)-d(\theta,\bfu,\theta,\cdot)\|_{V_{D}^{0,r}}
\rightarrow 0.
\end{equation}
\end{lem}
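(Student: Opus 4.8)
The plan is to reduce the statement to a single scalar integral and then invoke dominated convergence. The decisive structural observation is that in the form $d(\vartheta,\bfu,\theta,\varphi)=c_V\iom\varrho(\vartheta)\,\bfu\cdot\nabla\theta\,\varphi\,\dom$ the first slot enters only through $\varrho(\vartheta)$, while $\bfu$ and $\theta$ are held fixed. Hence, identifying the functional $d(\cdot,\bfu,\theta,\cdot)$ with its generating density in $V_D^{0,r}$ exactly as in the proof of \eqref{est_d}, one has
\begin{equation*}
\|d(\theta_n,\bfu,\theta,\cdot)-d(\theta,\bfu,\theta,\cdot)\|_{V_D^{0,r}}
= c_V\left(\iom|\varrho(\theta_n)-\varrho(\theta)|^r\,|\bfu\cdot\nabla\theta|^r\,\dom\right)^{1/r},
\end{equation*}
so it suffices to prove that the integral on the right tends to $0$.

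Next I would produce an integrable majorant. By the very embeddings used to establish \eqref{est_d} --- namely $\mathbf{D}^s_a\hookrightarrow\mathbf{W}^{2,s}\hookrightarrow L^{2r}(\Omega)^3$ for $\bfu$ and $W^{2-\varepsilon,r}(\Omega)\hookrightarrow W^{1,2r}(\Omega)$ for $\theta$, followed by H\"older's inequality --- the density $\bfu\cdot\nabla\theta$ belongs to $L^r(\Omega)$, so that $|\bfu\cdot\nabla\theta|^r\in L^1(\Omega)$. The bound \eqref{con:rho} gives $|\varrho(\theta_n)-\varrho(\theta)|\le 2\varrho^{\sharp}$ pointwise, whence the integrand is dominated, uniformly in $n$, by the fixed $L^1$-function $(2\varrho^{\sharp})^r|\bfu\cdot\nabla\theta|^r$.

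To get the pointwise convergence needed for dominated convergence, I would use that $\theta_n\to\theta$ in $W^{2-\varepsilon,r}(\Omega)\hookrightarrow L^r(\Omega)$, so that every subsequence of $(\theta_n)$ admits a further subsequence $(\theta_{n_k})$ converging to $\theta$ a.e.\ in $\Omega$; the continuity of $\varrho$ then forces $\varrho(\theta_{n_k})\to\varrho(\theta)$ a.e., and the full integrand tends to $0$ a.e. Dominated convergence now yields that the integral along $(\theta_{n_k})$ tends to $0$. Since this holds for a further subsequence of every subsequence of the nonnegative real sequence $a_n:=\iom|\varrho(\theta_n)-\varrho(\theta)|^r|\bfu\cdot\nabla\theta|^r\,\dom$, the subsequence principle gives $a_n\to 0$, which is exactly \eqref{conv_d}.

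The point to handle with care --- and the only real obstacle --- is that $\varrho$ is assumed merely continuous, not Lipschitz (Lipschitz continuity is imposed only later, for uniqueness). One therefore cannot estimate $|\varrho(\theta_n)-\varrho(\theta)|$ by $|\theta_n-\theta|$ and pass directly to a norm bound; the argument must instead route through a.e.\ convergence together with the uniform bound \eqref{con:rho}, which is precisely what dominated convergence and the subsequence extraction supply. I would be careful to phrase the conclusion via the subsequence principle rather than claiming a.e.\ convergence of the whole sequence $(\theta_n)$, which need not hold.
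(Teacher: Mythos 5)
Your proof is correct, and it differs from the paper's only in the final step, where you replace a citation by a self-contained argument. The reductions are identical: both you and the paper establish $\bfu\cdot\nabla\theta\in L^r(\Omega)$ via the embeddings behind \eqref{est_d}, both rely on the uniform bound \eqref{con:rho} together with mere continuity of $\varrho$, and both use $W^{2-\varepsilon,r}(\Omega)\hookrightarrow L^r(\Omega)$ to convert the hypothesis into $L^r$-convergence of $\theta_n$. At that point the paper observes that $f(\bfx,\xi)=c_V\varrho(\xi)\,\bfu(\bfx)\cdot\nabla\theta(\bfx)$ is a Carath\'{e}odory function and invokes the Fu\v{c}\'{i}k--Kufner theorem \cite[Theorem 12.10]{FucKuf} on continuity of the N\v{e}mytski\u{\i} (superposition) operator $\mathcal{N}:L^r(\Omega)\to L^r(\Omega)$, whereas you prove exactly that continuity by hand in the special case needed: the fixed dominating function $(2\varrho^{\sharp})^r|\bfu\cdot\nabla\theta|^r\in L^1(\Omega)$, a.e.-convergent subsequences extracted from $L^r$-convergence, dominated convergence along those subsequences, and the subsequence principle to recover convergence of the full sequence. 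Since this is precisely the standard proof of the cited theorem, nothing is lost; what your version buys is independence from the reference and an explicit demonstration of why Lipschitz continuity of $\varrho$ (assumed only later, for uniqueness) is not needed here, while the paper's version is shorter and delegates the measure-theoretic work to a known result. Your closing caution --- arguing through sub-subsequences rather than claiming a.e.\ convergence of the whole sequence $(\theta_n)$ --- is exactly the right care to take, and your identification of the functional with its $L^r$ density agrees with the convention the paper uses throughout Lemma \ref{lem:sup_estimates}.
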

\begin{proof}
{
Let $\bfu \in D^s_a$ and $\theta \in W^{2-\varepsilon,r}(\Omega)$.
Then
\begin{eqnarray*}
\| \bfu\cdot\nabla\theta \|_{L^r(\Omega)}
&\leqslant&
\|\bfu\|_{\mathbf{V}_{\sigma,D}^{0,2r}}
\|\nabla\theta\|_{L^{2r}(\Omega)^3}
\\
&\leqslant&
c(\Omega)
\|\bfu\|_{\mathbf{V}_{\sigma,D}^{2,s}}
\|\theta\|_{W^{2-\varepsilon,r}(\Omega)}
\\
&\leqslant&
C_d(\nu,\Omega)
\|\bfu\|_{\mathbf{D}^s_a}
\|\theta\|_{W^{2-\varepsilon,r}(\Omega)},
\end{eqnarray*}
which yields $\bfu\cdot\nabla\theta \in L^r(\Omega)$.
Hence,
the function $f$ defined by the formula (recall \eqref{con:rho})
$$
f(\bfx,\xi)= c_V\varrho(\xi)\bfu(\bfx)\cdot\nabla\theta(\bfx)
$$
has the Carath\'{e}odory property (see \cite[Definition 12.2]{FucKuf}), i.e.
(i) for all $\xi \in \mathbb{R}$, the function $f_{\xi}(\bfx)=f(\bfx,\xi)$
(as function of the variable $\bfx$)
is measurable on $\Omega$
and (ii) for almost all $\bfx \in \Omega$, the function
$f_{\bfx}(\xi)=f(\bfx,\xi)$
(as function of the variable $\xi$)
continuous on $\mathbb{R}$
(by continuity of $\varrho$).

Define the so-called N\v{e}mytski\u{\i} operator,
$\mathcal{N}: L^r(\Omega) \rightarrow L^r(\Omega)$,
by the formula
$$
\mathcal{N}(\theta)(\bfx)=f(\bfx,\theta).
$$
By \cite[Theorem 12.10]{FucKuf} we deduce that the N\v{e}mytski\u{\i} operator $\mathcal{N}$
is a continuous operator from  $L^r(\Omega)$ into $L^r(\Omega)$  (recall \eqref{con:rho}).
Since we assume $\theta_n \rightarrow \theta$ in $W^{2-\varepsilon,r}(\Omega)$
and $W^{2-\varepsilon,r}(\Omega) \hookrightarrow L^r(\Omega)$,
we get \eqref{conv_d}.
}
\end{proof}
\begin{rem}\label{comp_emb}
By Lemma \ref{emb_D} and the standard compact embedding of Sobolev spaces
the embedding $D^{r}_{\kappa} \rightarrow V^{2-\varepsilon,r}_{D}$ is compact and
we denote by $C_{\varepsilon}$ the embedding constant such that
\begin{displaymath}
\|\theta\|_{V^{2-\varepsilon,r}_{D}} \leqslant C_{\varepsilon}\|\theta\|_{D^r_{\kappa}}
\quad \forall \theta \in D^r_{\kappa}.
\end{displaymath}
\end{rem}

\section{The main result}
\label{sec:main_result}
Our problem reads as follows:
for given  $\bfg\in \mathbf{V}_{\sigma,D}^{0,s}$ and $\theta_D \in W^{2,r}(\Omega)$
find a couple $[\bfu,\theta]$ such that $\bfu \in D^s_a$,
$\theta \in \theta_D + D^r_{\kappa}$ and the following system
\begin{eqnarray}
a(\bfu,\bfv) + b(\bfu,\bfu,\bfv)
&=&
(\varrho(\theta)\bfg,\bfv),
\label{weak_NS}
 \\
\kappa(\theta,\varphi)+
d(\theta,\bfu,\theta,\varphi)
&=&
e(\bfu,\bfu,\varphi)
\label{weak_heat}
\end{eqnarray}
holds for every $[\bfv,\varphi] \in  \mathbf{V}_{\sigma,D}^{1,2} \times V_D^{1,2}$.
The couple $[\bfu,\theta]$ will be called the strong solution to the system \eqref{nse}--\eqref{bc_neumann}.
\begin{thm}[Main result]\label{theorem_main_result}
(i)
Let  $\bfg\in \mathbf{V}_{\sigma,D}^{0,s}$ and $\theta_D \in W^{2,r}(\Omega)$
and assume that
\begin{equation}\label{assumption}
\|\bfg\|_{\mathbf{V}_{\sigma,D}^{0,s}} \leqslant \frac{\beta}{4C_b\varrho^{\sharp}\varrho_0}
<
\frac{1}{2 C_{\varepsilon} C_d c_V (\varrho^{\sharp})^2}
\end{equation}
with some $\beta \in (0, 1)$. Then there exists the strong solution to the system
\eqref{nse}--\eqref{bc_neumann}.

(ii) Let $r>3/2$ and, in addition, $\varrho$
be Lipschitz continuous, i.e.
$$
|\varrho(\zeta_1) - \varrho(\zeta_2)| \leq C_{\varrho} |\zeta_1-\zeta_2| \quad \forall \zeta_1,\zeta_2 \in \mathbb{R} \quad
(C_{\varrho}={\rm const > 0}).
$$
Then there exists $\gamma>0$ such that, if $[\bfu,\theta]$ is the strong solution of the
system \eqref{nse}--\eqref{bc_neumann}, $\bfu \in D^s_a$,
$\theta \in \theta_D + D^r_{\kappa}$ and satisfying
$(\|\theta\|_{W^{2,r}(\Omega)}+\|\bfu\|_{\mathbf{D}^s_a}) < \gamma$,
then it is unique.
\end{thm}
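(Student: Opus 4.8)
The plan is to prove the two assertions separately within the functional framework of Lemmas~\ref{emb_D}--\ref{continuity_d}: the existence by a Leray--Schauder argument and the uniqueness by an absorption estimate on the difference of two solutions. For part~(i) I would homogenize the Dirichlet datum by writing $\theta=\theta_D+\vartheta$ with $\vartheta\in D^r_\kappa$ and build a fixed point map $\Phi\colon\vartheta_0\mapsto\vartheta$ on a ball of $V^{2-\varepsilon,r}_D$ in two steps. Given $\vartheta_0$, put $\theta_0=\theta_D+\vartheta_0$ and first solve the decoupled Navier--Stokes problem $a(\bfu,\bfv)+b(\bfu,\bfu,\bfv)=(\varrho(\theta_0)\bfg,\bfv)$ for $\bfu\in\mathbf{D}^s_a$. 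Interpreted through the defining property of $\mathbf{D}^s_a$, this is the fixed point $\bfu=T\bfu$, where $T\bfu$ has datum $-b(\bfu,\bfu,\cdot)+\varrho(\theta_0)\bfg\in\mathbf{V}_{\sigma,D}^{0,s}$; from \eqref{est_b} and $0<\varrho\le\varrho^\sharp$ one gets $\|T\bfu\|_{\mathbf{D}^s_a}\le\varrho_0 C_b\|\bfu\|_{\mathbf{D}^s_a}^2+\varrho^\sharp\|\bfg\|_{\mathbf{V}_{\sigma,D}^{0,s}}$ together with a Lipschitz bound of constant $\varrho_0 C_b(\|\bfu_1\|_{\mathbf{D}^s_a}+\|\bfu_2\|_{\mathbf{D}^s_a})$. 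The first inequality in \eqref{assumption} makes $T$ a contraction self-map of the ball of radius $\beta/(2\varrho_0 C_b)$ (contraction factor $\beta<1$), so the Banach principle yields a unique $\bfu$. With $\bfu$ frozen I would then solve the \emph{linear} Poisson-type problem for $\vartheta=\Phi(\vartheta_0)\in D^r_\kappa$ obtained by substituting $\theta_0,\bfu$ into the convective and dissipative terms, whose right-hand side lies in $V^{0,r}_D$ by \eqref{est_d}, \eqref{est_e} and the regularity of $\theta_D$.

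Next I would check that $\Phi$ maps a suitable ball of $V^{2-\varepsilon,r}_D$ into itself --- this is where the second inequality in \eqref{assumption} enters, bounding the convective contribution $\|d(\theta_0,\bfu,\theta_0,\cdot)\|_{V^{0,r}_D}$ through $C_\varepsilon$ and $C_d$ --- and that $\Phi$ is completely continuous: continuity of the velocity solve in $\vartheta_0$ together with Lemma~\ref{continuity_d} gives continuity into $D^r_\kappa$, while the compact embedding of Remark~\ref{comp_emb} upgrades this to complete continuity into $V^{2-\varepsilon,r}_D$. The Leray--Schauder theorem then produces a fixed point $\vartheta=\Phi(\vartheta)$, i.e. a strong solution.

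For part~(ii), let $[\bfu_1,\theta_1]$ and $[\bfu_2,\theta_2]$ be two strong solutions with $\|\theta_i\|_{W^{2,r}(\Omega)}+\|\bfu_i\|_{\mathbf{D}^s_a}<\gamma$, and set $\bfw=\bfu_1-\bfu_2\in\mathbf{D}^s_a$, $\psi=\theta_1-\theta_2\in D^r_\kappa$ (the lift $\theta_D$ cancels). Subtracting the momentum identities and using $b(\bfu_1,\bfu_1,\cdot)-b(\bfu_2,\bfu_2,\cdot)=b(\bfw,\bfu_1,\cdot)+b(\bfu_2,\bfw,\cdot)$, reading the difference through the $\mathbf{D}^s_a$-norm and invoking \eqref{est_b} bounds $\|\bfw\|_{\mathbf{D}^s_a}$ by $\varrho_0 C_b(\|\bfu_1\|_{\mathbf{D}^s_a}+\|\bfu_2\|_{\mathbf{D}^s_a})\|\bfw\|_{\mathbf{D}^s_a}+\|(\varrho(\theta_1)-\varrho(\theta_2))\bfg\|_{\mathbf{V}_{\sigma,D}^{0,s}}$, where the Lipschitz hypothesis and the embedding $W^{2,r}(\Omega)\hookrightarrow L^\infty(\Omega)$ valid for $r>3/2$ give $\|(\varrho(\theta_1)-\varrho(\theta_2))\bfg\|_{L^s}\le C_\varrho\|\psi\|_{L^\infty}\|\bfg\|_{L^s}\le c\,C_\varrho\|\bfg\|_{\mathbf{V}_{\sigma,D}^{0,s}}\|\psi\|_{D^r_\kappa}$. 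Likewise, subtracting the heat identities, telescoping $\varrho(\theta_1)\bfu_1\!\cdot\!\nabla\theta_1-\varrho(\theta_2)\bfu_2\!\cdot\!\nabla\theta_2=[\varrho(\theta_1)-\varrho(\theta_2)]\bfu_1\!\cdot\!\nabla\theta_1+\varrho(\theta_2)\bfw\!\cdot\!\nabla\theta_1+\varrho(\theta_2)\bfu_2\!\cdot\!\nabla\psi$ and writing the dissipation difference as $e(\bfw,\bfu_1,\cdot)+e(\bfu_2,\bfw,\cdot)$, the estimates of Lemma~\ref{lem:sup_estimates} and Remark~\ref{comp_emb} (once more using $\|\psi\|_{L^\infty}\le c\|\psi\|_{D^r_\kappa}$ on the first piece) give the coupled bounds
\begin{equation*}
\|\bfw\|_{\mathbf{D}^s_a}\le c\gamma\,\|\bfw\|_{\mathbf{D}^s_a}+c\,\|\psi\|_{D^r_\kappa},\qquad
\|\psi\|_{D^r_\kappa}\le c\gamma\,\|\bfw\|_{\mathbf{D}^s_a}+c\gamma\,\|\psi\|_{D^r_\kappa},
\end{equation*}
the constant in the cross term $c\,\|\psi\|_{D^r_\kappa}$ depending on $\|\bfg\|$ but not on $\gamma$.

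From the second inequality, $\|\psi\|_{D^r_\kappa}\le\frac{c\gamma}{1-c\gamma}\|\bfw\|_{\mathbf{D}^s_a}$ once $c\gamma<1$; inserting this into the first yields $\|\bfw\|_{\mathbf{D}^s_a}\le\big(c\gamma+\frac{c^2\gamma}{1-c\gamma}\big)\|\bfw\|_{\mathbf{D}^s_a}$, so for $\gamma$ small enough the bracket is $<1$ and $\bfw=\bfzero$, whence $\psi=0$ and the two solutions coincide. Note that one cannot simply add the two inequalities, because the buoyancy coupling $c\,\|\psi\|_{D^r_\kappa}$ carries an $O(\|\bfg\|)$ rather than an $O(\gamma)$ constant; it is the back-substitution, exploiting the $O(\gamma)$ reverse coupling through convection and dissipation, that closes the argument. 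The step I expect to be the main obstacle is the control of the density nonlinearity: the products $[\varrho(\theta_1)-\varrho(\theta_2)]\bfg$ and $[\varrho(\theta_1)-\varrho(\theta_2)]\bfu_1\!\cdot\!\nabla\theta_1$ are not absorbed by the coercive forms $a$ and $\kappa$, so to keep them in $\mathbf{V}_{\sigma,D}^{0,s}$ and $V^{0,r}_D$ one must pass from the pointwise bound $|\varrho(\theta_1)-\varrho(\theta_2)|\le C_\varrho|\psi|$ to the uniform bound $\|\psi\|_{L^\infty}\le c\|\psi\|_{D^r_\kappa}$ --- which is exactly why both the Lipschitz continuity of $\varrho$ and the restriction $r>3/2$ are imposed.
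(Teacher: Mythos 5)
Your proposal is correct and follows essentially the same route as the paper: an inner Banach contraction for the velocity with frozen temperature on the same ball $\|\bfv\|_{\mathbf{D}^s_a}\le \beta/(2C_b\varrho_0)$ (first inequality of \eqref{assumption}), an outer temperature map shown to be completely continuous (Lemma \ref{continuity_d} plus the compact embedding of Remark \ref{comp_emb}) and to map a large ball into itself via the second inequality of \eqref{assumption}, hence Leray--Schauder; and for uniqueness the same coupled difference estimates using Lipschitz continuity of $\varrho$ and $W^{2,r}(\Omega)\hookrightarrow L^{\infty}(\Omega)$ for $r>3/2$. The only cosmetic difference is the closing algebra in part (ii): you solve the temperature inequality for $\|\psi\|_{D^r_\kappa}$ and back-substitute into the velocity inequality, while the paper substitutes the temperature inequality into the buoyancy term and then adds the two inequalities --- both devices absorb the $O(\|\bfg\|)$ buoyancy coupling by pairing it with an $O(\gamma)$ factor, exactly as you observe.
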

\begin{rem}
Let us note that the assumption $r>3/2$
in Theorem \ref{theorem_main_result}(ii) ensures $\theta \in L^{\infty}(\Omega)$,
which is required in the proof of uniqueness.
\end{rem}

\section{Proof of the main result}\label{sec:proof_main}

\subsection{Existence of strong solutions}
\label{sec:main_result_existence}

For an arbitrary fixed
$[\bfu_0,\vartheta_0]\in \mathbf{D}^s_a \times V^{2-\varepsilon,r}_{D}$  we now
consider the following auxiliary problem: to find a couple
$[\bfw,\vartheta]\in\mathbf{D}^s_a \times D^r_{\kappa}$,
such that
\begin{eqnarray}
a(\bfw,\bfv)
&=&
(\varrho(\vartheta_0+\theta_D)\bfg,\bfv) - b(\bfu_0,\bfu_0,\bfv),
\label{linear_problem_1}
\\
\kappa(\vartheta,\varphi)
&=&
e(\bfw,\bfw,\varphi)-d(\vartheta_0+\theta_D,\bfw,\vartheta_0+\theta_D,\varphi)
-\kappa(\theta_D,\varphi)
\label{linear_problem_2}
\end{eqnarray}
for every $[\bfv,\varphi] \in  \mathbf{V}_{\sigma,D}^{1,2} \times V_D^{1,2}$.

First we prove some estimates for terms on the right hand
sides in \eqref{linear_problem_1}--\eqref{linear_problem_2}.
For an arbitrary $[\bfu_0,\vartheta_0]\in \mathbf{D}^s_a \times V_D^{2-\varepsilon,r}$ we arrive at
\begin{eqnarray}
\|(\varrho(\vartheta_0+\theta_D)\bfg,\cdot)\|_{\mathbf{V}_{\sigma,D}^{0,s}}
& \leqslant &
\varrho^{\sharp}\|\bfg\|_{\mathbf{V}_{\sigma,D}^{0,s}},
\label{est01a}
\\
\|b(\bfu_0,\bfu_0,\cdot)\|_{\mathbf{V}_{\sigma,D}^{0,s}}
& \leqslant &
\varrho_0 C_b \|\bfu_0\|^2_{\mathbf{D}^s_a},
\label{est02}
\end{eqnarray}
where $C_b=C_b(\nu,\Omega)$ (see \eqref{est_b}).
The inequalities \eqref{est01a}--\eqref{est02} together with \eqref{def_norms}
yield the estimate for the unique solution $\bfw \in \mathbf{D}^s_a$ of the problem \eqref{linear_problem_1}
\begin{equation}\label{est03}
\|\bfw\|_{\mathbf{D}^s_a}
\leqslant
\varrho^{\sharp}\|\bfg\|_{\mathbf{V}_{\sigma,D}^{0,s}}
+ \varrho_0 C_b \|\bfu_0\|^2_{\mathbf{D}^s_a}.
\end{equation}
Now having $\bfw \in \mathbf{D}^s_a$ and $\vartheta_0 \in V_D^{2-\varepsilon,r}$,
for all terms on the right hand side of \eqref{linear_problem_2} we arrive at the estimates
\begin{eqnarray}
\| \kappa(\theta_D,\cdot) \|_{V_D^{0,r}}
&\leqslant&
c \|\theta_D\|_{W^{2,r}(\Omega)},
\label{est05a}
\\
\| e(\bfw,\bfw,\cdot)  \|_{V_D^{0,r}}
& \leqslant &
c \|\bfw\|^2_{\mathbf{D}^s_a}
\label{est05c}
\end{eqnarray}
and
\begin{equation}\label{est05d}
\|d(\vartheta_0+\theta_D,\bfw,\vartheta_0+\theta_D,\cdot) \|_{V_D^{0,r}}
\leqslant
C_d \varrho^{\sharp} c_V  \|\bfw\|_{\mathbf{D}^s_a}
\left( \|\vartheta_0\|_{V_D^{2-\varepsilon,r}}+\|\theta_D\|_{W^{2,r}} \right).
\end{equation}
Now the inequalities \eqref{est05a}--\eqref{est05d} together with \eqref{def_norms}
yield the estimate for the unique solution $\vartheta \in {D^r_{\kappa}}$ of the problem \eqref{linear_problem_2}
\begin{eqnarray}\label{est09}
\|\vartheta\|_{V^{2-\varepsilon,r}_{D}}
\leqslant
C_{\varepsilon} \|\vartheta\|_{D^r_{\kappa}}
&
\leqslant
&
 C_{\varepsilon}  C_d \varrho^{\sharp} c_V  \|\bfw\|_{\mathbf{D}^s_a}
\|\vartheta_0\|_{V^{2-\varepsilon,r}_{D}}
\nonumber
\\
&&
+ c_1 C_{\varepsilon} \|\theta_D\|_{W^{2,r}(\Omega)}
+ c_2 C_{\varepsilon} \|\bfw\|^2_{\mathbf{D}^s_a}
\nonumber
\\
&&
+
C_{\varepsilon} C_d \varrho^{\sharp} c_V  \|\bfw\|_{\mathbf{D}^s_a}
\|\theta_D\|_{W^{2,r}(\Omega)}.
\end{eqnarray}

For a given couple $[\bfu_0,\vartheta_0]\in \mathbf{D}^s_a \times V_D^{2-\varepsilon,r}$
let $\bfw \in \mathbf{D}^s_a$ be the unique solution of the equation \eqref{linear_problem_1}.
Fix $\vartheta_0 \in  V_D^{2-\varepsilon,r}$ and consider the map
\begin{center}
$
\mathcal{K}_{\vartheta_0}:  \mathbf{D}^s_a   \rightarrow  \mathbf{D}^s_a  $
\qquad with $\mathcal{K}_{\vartheta_0}(\bfu_0)=\bfw.
$
\end{center}
\begin{lem}\label{contraction}
Operator $\mathcal{K}_{\vartheta_0}$ realizes contraction in the closed ball
($\beta$ is the constant from \eqref{assumption})
$$
M=\left\{\bfv\in \mathbf{D}^s_a; \;
\|\bfv\|_{\mathbf{D}^s_a} \leqslant \frac{\beta}{2 C_b \varrho_0} \right\}.
$$
\end{lem}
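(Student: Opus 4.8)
The plan is to apply the Banach contraction principle, so I must verify two things: that $\mathcal{K}_{\vartheta_0}$ maps the closed ball $M$ into itself, and that its restriction to $M$ is Lipschitz with constant strictly below $1$. The structural observation that makes everything run is that $\vartheta_0$ is held \emph{fixed}, so the buoyancy data $(\varrho(\vartheta_0+\theta_D)\bfg,\cdot)$ on the right-hand side of \eqref{linear_problem_1} is independent of the argument $\bfu_0$ and will cancel when I form differences. I will also use that, by the definition \eqref{def_norms}, the $\mathbf{D}^s_a$-norm of any solution of \eqref{linear_problem_1} equals the $\mathbf{V}_{\sigma,D}^{0,s}$-norm of its right-hand side.

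First I would establish the self-map property. Taking $\bfu_0\in M$ and $\bfw=\mathcal{K}_{\vartheta_0}(\bfu_0)$, the estimate \eqref{est03} combined with the first inequality in \eqref{assumption} bounds the buoyancy contribution by $\varrho^{\sharp}\|\bfg\|_{\mathbf{V}_{\sigma,D}^{0,s}}\leqslant \beta/(4C_b\varrho_0)$, while $\|\bfu_0\|_{\mathbf{D}^s_a}\leqslant \beta/(2C_b\varrho_0)$ gives $\varrho_0 C_b\|\bfu_0\|^2_{\mathbf{D}^s_a}\leqslant \beta^2/(4C_b\varrho_0)$. Since $\beta\in(0,1)$ we have $\beta^2\leqslant\beta$, so summing yields $\|\bfw\|_{\mathbf{D}^s_a}\leqslant \beta/(2C_b\varrho_0)$, i.e. $\bfw\in M$.

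For the contraction I would take $\bfu_1,\bfu_2\in M$, set $\bfw_i=\mathcal{K}_{\vartheta_0}(\bfu_i)$, subtract the two instances of \eqref{linear_problem_1} (the buoyancy terms cancel), and use linearity of $a$ to get $a(\bfw_1-\bfw_2,\bfv)=b(\bfu_2,\bfu_2,\bfv)-b(\bfu_1,\bfu_1,\bfv)$ for all test functions. Via \eqref{def_norms}, $\|\bfw_1-\bfw_2\|_{\mathbf{D}^s_a}$ is the $\mathbf{V}_{\sigma,D}^{0,s}$-norm of the right-hand side, which I split using bilinearity of $b$ in its first two slots,
$$
b(\bfu_1,\bfu_1,\cdot)-b(\bfu_2,\bfu_2,\cdot)=b(\bfu_1-\bfu_2,\bfu_1,\cdot)+b(\bfu_2,\bfu_1-\bfu_2,\cdot).
$$
Applying \eqref{est_b} to each term and using $\|\bfu_i\|_{\mathbf{D}^s_a}\leqslant \beta/(2C_b\varrho_0)$ produces
$$
\|\bfw_1-\bfw_2\|_{\mathbf{D}^s_a}\leqslant \varrho_0 C_b\big(\|\bfu_1\|_{\mathbf{D}^s_a}+\|\bfu_2\|_{\mathbf{D}^s_a}\big)\|\bfu_1-\bfu_2\|_{\mathbf{D}^s_a}\leqslant \beta\,\|\bfu_1-\bfu_2\|_{\mathbf{D}^s_a},
$$
which is a contraction because $\beta<1$.

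I do not expect a genuine obstacle here: the statement is a clean fixed-point setup. The only point demanding care is the constant bookkeeping, and it is reassuring that both estimates are governed by the single parameter $\beta$ — the radius $\beta/(2C_b\varrho_0)$ of $M$ and the smallness condition \eqref{assumption} are evidently calibrated precisely so that $\beta^2\leqslant\beta$ closes the self-map bound and the product $\varrho_0 C_b\cdot \beta/(C_b\varrho_0)=\beta$ closes the Lipschitz bound.
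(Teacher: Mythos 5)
Your proposal is correct and follows essentially the same route as the paper: the self-map property comes from \eqref{est03} together with \eqref{assumption} (with the $\beta^2\leqslant\beta$ calibration you spell out), and the contraction estimate comes from cancelling the fixed buoyancy term, splitting $b(\bfu_1,\bfu_1,\cdot)-b(\bfu_2,\bfu_2,\cdot)$ bilinearly, and applying \eqref{est_b} to get the Lipschitz constant $\varrho_0 C_b(\|\bfu_1\|_{\mathbf{D}^s_a}+\|\bfu_2\|_{\mathbf{D}^s_a})\leqslant\beta<1$. The paper's proof is terser (it asserts the self-map property in one line and writes the contraction bound directly), but the underlying argument is identical to yours.
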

\begin{proof}
Using the estimate \eqref{est03} and the assumption \eqref{assumption} we get
$\mathcal{K}_{\vartheta_0}(M)\subset M$. Let $\bfu_0$ and $\bar{\bfu}_0 \in M$.
Then by \eqref{est_b} we arrive at
\begin{eqnarray*}
\|\mathcal{K}_{\vartheta_0}(\bfu_0) - \mathcal{K}_{\vartheta_0}(\bar{\bfu}_0) \|_{\mathbf{D}^s_a}
&=&
\| b(\bfu_0,\bfu_0,\cdot) - b(\bar{\bfu}_0,
\bar{\bfu}_0,\cdot) \|_{\mathbf{V}_{\sigma,D}^{0,s}}
\nonumber
\\
&\leqslant& \varrho_0 C_b
(\|\bfu_0\|_{\mathbf{D}^s_a} + \|\bar{\bfu}_0\|_{\mathbf{D}^s_a}) \|\bfu_0 - \bar{\bfu}_0\|_{\mathbf{D}^s_a}
\nonumber
\\
&<& \beta\|\bfu_0 - \bar{\bfu}_0\|_{\mathbf{D}^s_a}
\end{eqnarray*}
with $\beta\in(0,1)$ (cf. \eqref{assumption}).
The proof is complete.
\end{proof}
As a consequence of Lemma \ref{contraction} and the Banach fixed point theorem there exists
the unique $\bfw \in M$
such that $\mathcal{K}_{\vartheta_0}(\bfw)=\bfw$. Define the operator
$\mathcal{T}_1: V^{2-\varepsilon,r}_{D} \rightarrow M$
by $\mathcal{T}_1(\vartheta_0)=\bfw$.
Let $\vartheta = \mathcal{T}_2(\mathcal{T}_1(\vartheta_0),\vartheta_0)$,
$\vartheta\in D^{r}_{\kappa} \hookrightarrow V^{2-\varepsilon,r}_{D}$,
be the solution of the problem \eqref{linear_problem_2}.
If there exists a fixed point of $\mathcal{T}_2$
(again denoted by $\vartheta$), then $\theta=\mathcal{T}_2(\vartheta)+\theta_D$
and $\bfu = \mathcal{T}_1(\vartheta)$
solve the system \eqref{weak_NS}--\eqref{weak_heat}.

Let us note that the ball $M$ is independent of the choice of $\vartheta_0 \in V^{2-\varepsilon,r}_{D}$
and the right hand side of the inequality \eqref{est09} depends linearly on $\|\vartheta_0\|_{V^{2-\varepsilon,r}_{D}}$. Moreover,
for $\bfw \in M$ and taking into account the assumption \eqref{assumption}
we obtain
\begin{displaymath}
C_{\varepsilon} C_d \varrho^{\sharp} c_V  \|\bfw\|_{\mathbf{D}^s_a}
\leqslant
C_{\varepsilon} C_d  \varrho^{\sharp} c_V\frac{\beta}{2 C_b \varrho_0}
<1.
\end{displaymath}
Consequently, there exists sufficiently large $R$ such that $\mathcal{T}_2$ maps the ball
\begin{displaymath}
B=\left\{\phi \in V^{2-\varepsilon,r}_{D};
\;  \|\phi\|_{V^{2-\varepsilon,r}_{D}} \leqslant R \right\}
\end{displaymath}
into itself. By the compact embedding $D^{r}_{\kappa} \hookrightarrow \hookrightarrow V^{2-\varepsilon,r}_{D}$
the operator $\mathcal{T}_2$ is completely continuous if we prove that $\mathcal{T}_2$ is continuous.

Let $\vartheta_0 \in V^{2-\varepsilon,r}_{D}$
and $(\vartheta_0)_n$ be a sequence in  $V^{2-\varepsilon,r}_{D}$
such that $(\vartheta_0)_n \rightarrow \vartheta_0$.
Let $\bfw=\mathcal{T}_1(\vartheta_0)$ and $\bfw_n=\mathcal{T}_1((\vartheta_0)_n)$.
By noting \eqref{def_norms} we arrive at the
estimate
\begin{eqnarray*}
\|\bfw-\bfw_n \|_{\mathbf{D}^s_a}
&\leqslant&
\| b(\bfw,\bfw,\cdot)
- b(\bfw_n,\bfw_n,\cdot) \|_{\mathbf{V}_{\sigma,D}^{0,s}}
\\
&&
+
\| (\varrho(\vartheta_0+\theta_D)\bfg,\cdot)
- (\varrho((\vartheta_0)_n+\theta_D)\bfg,\cdot)  \|_{\mathbf{V}_{\sigma,D}^{0,s}}
\\
&\leqslant& \varrho_0
C_b
(\|\bfw\|_{\mathbf{D}^s_a}
+ \|\bfw_n\|_{\mathbf{D}^s_a})
\|\bfw - \bfw_n\|_{\mathbf{D}^s_a}
\\
&&
+
\| (\varrho(\vartheta_0+\theta_D)\bfg,\cdot)
- (\varrho((\vartheta_0)_n+\theta_D)\bfg,\cdot)  \|_{\mathbf{V}_{\sigma,D}^{0,s}}
\\
&&
\!\!\!\!\!\!\!\!\!\!\!\!\!\!\!\!\!
\leqslant
\beta
\|\bfw - \bfw_n\|_{\mathbf{D}^s_a}
+
\| (\varrho(\vartheta_0+\theta_D)\bfg,\cdot)
- (\varrho((\vartheta_0)_n+\theta_D)\bfg,\cdot)  \|_{\mathbf{V}_{\sigma,D}^{0,s}}
\end{eqnarray*}
and hence
\begin{equation}\label{cont_w}
(1-\beta)\|\bfw-\bfw_n \|_{\mathbf{D}^s_a}
\leqslant
\| (\varrho(\vartheta_0+\theta_D)\bfg,\cdot)
- (\varrho((\vartheta_0)_n+\theta_D)\bfg,\cdot)  \|_{\mathbf{V}_{\sigma,D}^{0,s}},
\end{equation}
where $\beta \in (0,1)$ (given by \eqref{assumption}).
Consequently, the operator $\mathcal{T}_1: V^{2-\varepsilon,r}_{D}
\rightarrow M$
is continuous.
To prove the continuity of $\mathcal{T}_2$ let us estimate
\begin{multline*}
\|\mathcal{T}_2(\vartheta_0)-\mathcal{T}_2((\vartheta_0)_n)\|_{V^{2-\varepsilon,r}_{D}}
\leqslant
C_{\varepsilon}
\|\mathcal{T}_2(\vartheta_0)-\mathcal{T}_2((\vartheta_0)_n)\|_{D^r_{\kappa}}
\\
\leqslant C_{\varepsilon}\left(
\|e(\bfw,\bfw-\bfw_n,\cdot)\|_{V_D^{0,r}}
+ \|e(\bfw-\bfw_n,\bfw_n,\cdot)\|_{V_D^{0,r}}
\right.
\\
+\| d(\vartheta_0+\theta_D,\bfw,\vartheta_0+\theta_D,\cdot)
- d((\vartheta_0)_n+\theta_D,\bfw,\vartheta_0+\theta_D,\cdot)  \|_{V_D^{0,r}}
\\
+\| d((\vartheta_0)_n+\theta_D,\bfw_n,(\vartheta_0)_n-\vartheta_0,\cdot)  \|_{V_D^{0,r}}
+\| d((\vartheta_0)_n+\theta_D,\bfw_n-\bfw,\vartheta_0+\theta_D,\cdot)  \|_{V_D^{0,r}}
\\
\left.
+
\| (\varrho((\vartheta_0)_n+\theta_D)\alpha_2
\bfg\cdot(\bfw-\bfw_n),\cdot)_{\Omega}  \|_{V_D^{0,r}}
\right) .
\end{multline*}
By Lemma \ref{continuity_d}, estimates \eqref{est_d}, \eqref{est_e}
and \eqref{cont_w} and continuity of $\varrho$ we conclude that
$$
\|\mathcal{T}_2(\vartheta_0)-\mathcal{T}_2((\vartheta_0)_n)\|_{V^{2-\varepsilon,r}_{D}}
\rightarrow 0
$$
whenever $\|(\vartheta_0)-((\vartheta_0)_n)\|_{V^{2-\varepsilon,r}_{D}}
\rightarrow 0$. Consequently, $\mathcal{T}_2$ is completely continuous and
$\mathcal{T}_2(B)\subset B$. The existence of at least one fixed point
$\vartheta=\mathcal{T}_2(\vartheta)$ follows from the Leray-Schauder theorem.
Now the couple $[\bfu,\theta]$, $\bfu = \mathcal{T}_1(\vartheta)$
and $\theta = \theta_D + \vartheta$,
is the solution of the problem \eqref{weak_NS}--\eqref{weak_heat}.

\subsection{Uniqueness}
\label{sec:main_result_uniqueness}

Here we prove the uniqueness of the strong solution stated in
the main result. Suppose that all assumptions of Theorem
\ref{theorem_main_result} are satisfied and there are two strong
solutions $[\bfu_1,\theta_1],[\bfu_2,\theta_2]$
of the system \eqref{nse}--\eqref{bc_neumann} such that
 $\bfu_1,\bfu_2 \in D^s_a$,
$\theta_1,\theta_2 \in \theta_D + D^r_{\kappa}$.
Denote
$\bfz=\bfu_1-\bfu_2$ and $\sigma=\theta_1 - \theta_2$.
Then $\bfz$ and $\sigma$ satisfy the equations
\begin{eqnarray*}
a(\bfz,\bfv)
+
b(\bfz,\bfu_2,\bfv)
+
b(\bfu_1,\bfz,\bfv)
-
((\varrho(\theta_1)-\varrho(\theta_2))\bfg,\bfv)
&=&
0,
\\
\kappa(\sigma,\varphi)
+
d(\theta_1,\bfu_1,\theta_2,\varphi) - d(\theta_2,\bfu_1,\theta_2,\varphi)
+
d(\theta_2,\bfz,\theta_1,\varphi)
\nonumber
\\
+
d(\theta_2,\bfu_2,\sigma,\varphi)
-
e(\bfz,\bfu_1,\varphi)
-
e(\bfu_2,\bfz,\varphi)
&=&
0
\end{eqnarray*}
for every $[\bfv,\varphi] \in  \mathbf{V}_{\sigma,D}^{1,2} \times V_D^{1,2}$.
By \eqref{def_norms} we arrive at the
estimates
\begin{equation}\label{est_uni_03}
\|\bfz\|_{\mathbf{D}^s_a}
\leqslant
\|b(\bfz,\bfu_2,\cdot)\|_{\mathbf{D}^s_a}
+
\|b(\bfu_1,\bfz,\cdot)\|_{\mathbf{D}^s_a}
+
\|(\varrho(\theta_1)-\varrho(\theta_2))\bfg\|_{\mathbf{V}_{\sigma,D}^{0,s}}
\end{equation}
and
\begin{eqnarray}\label{est_uni_04}
\|\sigma\|_{D^r_{\kappa}}
&
\leqslant
&
\| d(\theta_1,\bfu_1,\theta_2,\cdot) - d(\theta_2,\bfu_1,\theta_2,\cdot) \|_{V_D^{0,r}}
+
\| d(\theta_2,\bfz,\theta_1,\cdot) \|_{V_D^{0,r}}
\nonumber
\\
&&
\| d(\theta_2,\bfu_2,\sigma,\cdot) \|_{V_D^{0,r}}
+
\| e(\bfz,\bfu_1,\cdot) \|_{V_D^{0,r}}
+
\| e(\bfu_2,\bfz,\cdot) \|_{V_D^{0,r}}.
\nonumber
\\
\end{eqnarray}

To estimate term by term on the right-hand sides of \eqref{est_uni_03}
and \eqref{est_uni_04} we use Lemma \ref{lem:sup_estimates} to obtain
\begin{eqnarray*}
\|b(\bfz,\bfu_2,\cdot)\|_{\mathbf{V}_{\sigma,D}^{0,s}}
&\leqslant&
\varrho_0
C_b
\|\bfz\|_{\mathbf{D}^s_a}  \|\bfu_2\|_{\mathbf{D}^s_a},
\\
\|b(\bfu_1,\bfz,\cdot)\|_{\mathbf{V}_{\sigma,D}^{0,s}}
&\leqslant&
\varrho_0
C_b
\|\bfu_1\|_{\mathbf{D}^s_a}  \|\bfz\|_{\mathbf{D}^s_a},
\\
\|(\varrho(\theta_1)-\varrho(\theta_2))\bfg\|_{\mathbf{V}_{\sigma,D}^{0,s}}
&\leqslant&
C_{\varrho}
\|\sigma\|_{V_D^{0,\infty}}
\|\bfg\|_{\mathbf{V}_{\sigma,D}^{0,s}}.
\end{eqnarray*}
Since $r>3/2$  we have,
using known embedding for Sobolev spaces,
$\|\sigma\|_{V_D^{0,\infty}} \leqslant C_1\|\sigma\|_{D^r_{\kappa}}$.
Further
\begin{equation*}
\| d(\theta_1,\bfu_1,\theta_2,\cdot) - d(\theta_2,\bfu_1,\theta_2,\cdot) \|_{V_D^{0,r}}
\leqslant
c_V C_1 C_{\varepsilon}
C_d
C_{\varrho}\|\sigma\|_{D^r_{\kappa}}
\|\bfu_1\|_{\mathbf{D}^s_a}
\|\theta_2\|_{W^{2,r}(\Omega)}
\end{equation*}
and finally
\begin{eqnarray*}
\| d(\theta_2,\bfz,\theta_1,\cdot) \|_{V_D^{0,r}}
&\leqslant&
c_V C_{\varepsilon} \varrho^{\sharp}
C_d
\|\bfz\|_{\mathbf{D}^s_a}
\|\theta_1\|_{W^{2,r}(\Omega)},
\\
\| d(\theta_2,\bfu_2,\sigma,\cdot) \|_{V_D^{0,r}}
&\leqslant&
c_V C_{\varepsilon} \varrho^{\sharp}
C_d
\|\bfu_2\|_{\mathbf{D}^s_a}
\|\sigma\|_{D^r_{\kappa}},
\\
\|  e(\bfz,\bfu_1,\cdot) \|_{V_{D}^{0,r}}
&\leqslant&
\alpha_1
\nu
C_e
\|\bfz\|_{\mathbf{D}^s_a}  \|\bfu_1\|_{\mathbf{D}^s_a},
\\
\|  e(\bfu_2,\bfz,\cdot) \|_{V_{D}^{0,r}}
&\leqslant&
\alpha_1
\nu
C_e
\|\bfu_2\|_{\mathbf{D}^s_a}  \|\bfz\|_{\mathbf{D}^s_a}.
\end{eqnarray*}
Hence
\begin{multline}\label{ineq_uniq_010}
\|\sigma\|_{D^r_{\kappa}}
\leqslant
\left(
c_V C_1 C_{\varepsilon} C_d C_{\varrho}
\|\bfu_1\|_{\mathbf{D}^s_a}
\|\theta_2\|_{W^{2,r}(\Omega)} +
c_V C_{\varepsilon} \varrho^{\sharp}
C_d \|\bfu_2\|_{\mathbf{D}^s_a}
\right)
\|\sigma\|_{D^r_{\kappa}}
\\
+\left(
c_V C_{\varepsilon} \varrho^{\sharp} C_d \|\theta_1\|_{W^{2,r}(\Omega)}
+
\alpha_1 \nu C_e \|\bfu_1\|_{\mathbf{D}^s_a}
+
\alpha_1
\nu
C_e
\|\bfu_2\|_{\mathbf{D}^s_a}
\right)
\|\bfz\|_{\mathbf{D}^s_a}
\end{multline}
and
\begin{multline}\label{ineq_uniq_011}
\|\bfz\|_{\mathbf{D}^s_a}
\leqslant
C_1 C_{\varrho}
\|\bfg\|_{\mathbf{V}_{\sigma,D}^{0,s}}
\|\sigma\|_{D^r_{\kappa}}
+\left(\varrho_0 C_b (\|\bfu_1\|_{\mathbf{D}^s_a}+\|\bfu_2\|_{\mathbf{D}^s_a} ) \right)
\|\bfz\|_{\mathbf{D}^s_a}
\\
\leqslant
C_1 C_{\varrho}
\|\bfg\|_{\mathbf{V}_{\sigma,D}^{0,s}}
\left(
c_V C_1 C_{\varepsilon} C_d C_{\varrho}
\|\bfu_1\|_{\mathbf{D}^s_a}
\|\theta_2\|_{W^{2,r}(\Omega)} +
c_V C_{\varepsilon} \varrho^{\sharp}
C_d \|\bfu_2\|_{\mathbf{D}^s_a}
\right)
\|\sigma\|_{D^r_{\kappa}}
\\
+ \left[ C_1 C_{\varrho}
\|\bfg\|_{\mathbf{V}_{\sigma,D}^{0,s}}\left(
c_V C_{\varepsilon} \varrho^{\sharp} C_d \|\theta_1\|_{W^{2,r}(\Omega)}
+
\alpha_1 \nu C_e \|\bfu_1\|_{\mathbf{D}^s_a}
+
\alpha_1
\nu
C_e
\|\bfu_2\|_{\mathbf{D}^s_a}
\right)
\right.
\\
\left.
\varrho_0 C_b (\|\bfu_1\|_{\mathbf{D}^s_a}+\|\bfu_2\|_{\mathbf{D}^s_a} )
\right]\|\bfz\|_{\mathbf{D}^s_a}.
\end{multline}
Adding \eqref{ineq_uniq_010} and \eqref{ineq_uniq_011} together
 we get the inequality of the form
\begin{equation*}
\|\sigma\|_{D^r_{\kappa}}
+
\|\bfz\|_{\mathbf{D}^s_a}
\leqslant
R_1(\bfu_1,\bfu_2,\theta_1,\theta_2)
\|\sigma\|_{D^r_{\kappa}}
+
R_2(\bfu_1,\bfu_2,\theta_1,\theta_2)\|\bfz\|_{\mathbf{D}^s_a}.
\end{equation*}
Thus, if $R_1(\bfu_1,\bfu_2,\theta_1,\theta_2)<1$
and $R_2(\bfu_1,\bfu_2,\theta_1,\theta_2)<1$, we have
$\|\sigma\|_{D^r_{\kappa}}=0$ and $\|\bfz\|_{\mathbf{D}^s_a}=0$.
Therefore $\theta_1 = \theta_2$ and $\bfu_1=\bfu_2$.


{
\appendix

\section{Regularity of solutions to appropriate linear elliptic problems}
\label{app:regularity}
In this Appendix, we discuss appropriate linear boundary value problems
for the Poisson equation and the Stokes system in the channel $\Omega$ with
the mixed boundary  conditions.
We establish regularity results for weak solutions
based on the assumptions on the geometry of the channel $\Omega$
and provided
the data of the problems are sufficiently smooth.
Recall that $\Gamma_D$ and $\Gamma_N$ belong to the class
${C}^{\infty}$ and form an angle $\omega_{\mathcal{M}}=\pi/2$ (in the
sense of tangential planes) at all points of $\mathcal{M}$ (the set
in which boundary conditions change their type).

\subsection{The mixed problem for the Stokes system}
We consider the problem
\begin{eqnarray}
-  \Delta \bfu + \nabla P
&=&\bff \qquad \,  \textmd{ in } \Omega,
\label{Stok 20}
\\
\nabla \cdot \bfu &=& 0  \qquad \;\,  \textmd{ in } \Omega,
\label{rovnice kontinuity S0}
\\
\bfu &=& {\bf0}  \qquad \;\, \textmd{ on }  \Gamma_D,
\label{dirichlet S0}
\\
-P\bfn+\frac{\partial \bfu}{\partial \bfn}&=& {\bf0}
\qquad \;\, \textmd{ on } \Gamma_N.
\label{neumann S0}
\end{eqnarray}
Without loss of generality we suppose that
the viscosity of the fluid is normalized to one ($\nu=1$).

For arbitrary real $p \in (1,\infty)$ and
$\bfdelta = (\delta_1,\dots, \delta_d)$, $\delta_i > -2/p$,
$i=1,\dots,d$,
we denote by $\mathcal{W}^{k,p}_{\bfdelta}(\Omega)$
the weighted Sobolev space with the norm
(see e.g. \cite[Chapter 8.3.1.]{MazRoss2010})
\begin{displaymath}
\|\varphi\|_{\mathcal{W}^{k,p}_{\bfdelta}(\Omega)}
=\left(\int_{\Omega} \prod_{i=1}^d r_i(\bfx)^{p \delta_i} \sum_{|\bfalpha|\leq k}
|\partial^{\bfalpha}_{\bfx} \varphi(\bfx)|^p \, {\rm d}\Omega
\right)^{1/p},
\end{displaymath}
$r_i(\bfx)=\textmd{dist}(\bfx,M_i)$.

\medskip

Let $\bfx_0$ be an arbitrary point of $\mathcal{M}$.
Define a dihedral angle $\mathcal{D}$,
$\bfx_0 \in \mathcal{D}$, with faces $\gamma_D$ and $\gamma_N$, such
that $\gamma_D$ and $\gamma_N$ are tangential planes to $\Gamma_D$ and $\Gamma_N$,
respectively, at the point $\bfx_0$ (see Figure \ref{dyhedral_angle}).

\begin{figure}[h]
  \includegraphics[angle=0,width=6cm]{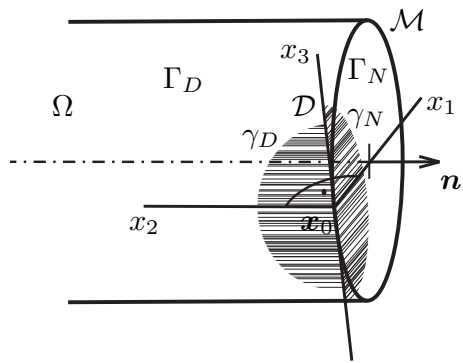}
\caption{Dyhedral angle $\mathcal{D}$.}
\label{dyhedral_angle}
\end{figure}

We choose $\bfx_0$ as the origin and the Cartesian coordinate system in such a way that
$\mathcal{D}$ can be expressed as follows
$$
\mathcal{D} = \left\{[x_1,x_2,x_3]\in \mathbb{R}^3,\; [x_1,x_2] \in K, \; x_3 \in \mathbb{R}  \right\},
$$
where $K=\left\{[x_1,x_2]=[r \cos \omega,r \sin \omega ]; \;
0<r<\infty, \; 0<\omega<\pi/2  \right\}$ is an infinite angle with the sides
$\Gamma_{KD} = \left\{ [x_1,x_2]\in \mathbb{R}^2,\; \omega=\pi/2 \right\}$
and
$\Gamma_{KN} = \left\{ [x_1,x_2]\in \mathbb{R}^2,\; \omega=0 \right\}$.
There is a ball-neighborhood $\mathcal{U}(\bfx_0)$ of the point $\bfx_0$
with radius $r_0$ such that
the domain $\Omega$ is diffeomorphic to the dihedral angle $\mathcal{D}$
in the neighborhood $\mathcal{U}(\bfx_0)$.
Consider a cut-off function  $\chi(|\bfx|)
\in {C}^{\infty}(\mathbb{R}^3)$,
$0 \leq \chi(|\bfx|) \leq 1$,
\begin{displaymath}
\chi(|\bfx|) =  \quad  \left\{
\begin{array}{ccl}
1 & {\rm for} &  |\bfx| < r_0 / 2 , \\
0 & {\rm for} &  |\bfx| > r_0 .
\end{array} \right.
\end{displaymath}
Multiplying the system of equations \eqref{Stok 20}--\eqref{neumann S0}
by $\chi$ we get the boundary value problem for $\tilde{\bfu} = \chi \bfu$
and $\tilde{P} = \chi P$ in the dihedral angle $\mathcal{D}$
\begin{eqnarray}
-  \Delta \tilde{\bfu} + \nabla \tilde{P}
&=&\tilde{\bff} \qquad \,  \textmd{ in } \mathcal{D},
\label{NavStok 2}
\\
\nabla \cdot \tilde{\bfu} &=& \tilde{g}  \qquad \;\,  \textmd{ in } \mathcal{D},
\label{rovnice kontinuity S}
\\
\tilde{\bfu} &=& {\bf0}  \qquad \;\, \textmd{ on }  \gamma_D,
\label{dirichlet S}
\\
-\tilde{P}\bfn +\frac{\partial \tilde{\bfu}}{\partial \bfn}&=& {\bf0}
 \qquad \;\, \textmd{ on }  \gamma_N,
\label{neumann S}
\end{eqnarray}
where
\begin{equation}\label{g}
\tilde{\bff} =-\bfu\Delta \chi
-2\frac{\partial \bfu}{\partial x_1}\frac{\partial \chi}{\partial x_1}
-2 \frac{\partial \bfu}{\partial x_2}\frac{\partial \chi}{\partial x_2}
-2 \frac{\partial \bfu}{\partial x_3}\frac{\partial \chi}{\partial x_3}
+\bff \chi +
P(\nabla \chi)
\end{equation}
and
\begin{equation}
\tilde{g} = \bfu \cdot \nabla\chi.
\nonumber
\end{equation}
After the application of the Fourier transform with respect to $x_3$,
$x_3 \rightarrow \eta$, and letting $\eta=0$ we get the corresponding
two-dimensional problem in the plane angle $K$ with the sides $\Gamma_{KD}$ and $\Gamma_{KN}$
\begin{align}
-  \Delta_{\bar{x}} (\hat{u}_1,\hat{u}_2)  +  \nabla_{\bar{x}} \hat{P}
&=
(\hat{f}_1,\hat{f}_2)  &&\textmd{in}
\quad K,
\label{stationary S 2D}
\\
-  \Delta_{\bar{x}} \hat{u}_3 &= \hat{f}_3 && \textmd{in} \quad K,
\\
\nabla_{\bar{x}} \cdot (\hat{u}_1,\hat{u}_2) &= \hat{g} && \textmd{in} \quad K,
\label{rovnice kontinuity S}
\\
(\hat{u}_1,\hat{u}_2,\hat{u}_3) &= (0,0,0)
&& \textmd{on} \quad \Gamma_{KD},
\label{dirichlet S}
\\
- \hat{P}\bfn_{\bar{x}} +  \frac{\partial (\hat{u}_1,\hat{u}_2)}{\partial \bfn_{\bar{x}}} &= (0,0)
&& \textmd{on} \quad \Gamma_{KN},
\label{neumann S}
\\
\frac{\partial\hat{u}_3}{\partial \bfn_{\bar{x}}} &= 0
&& \textmd{on} \quad \Gamma_{KN}.
\label{neumann P}
\end{align}
Here
$\hat{\bff}=\mathcal{F}_{x_3\rightarrow \eta}[\tilde{\bff}]$,
$\hat{g}=\mathcal{F}_{x_3\rightarrow\eta}[\tilde{g}]$,
$\hat{\bfu}=\mathcal{F}_{x_3\rightarrow \eta}[\tilde{\bfu}]$,
$\hat{P}=\mathcal{F}_{x_3\rightarrow \eta}[\tilde{P}]$.
Under the polar coordinates $(r,\omega)$ the problem
\eqref{stationary S 2D}--\eqref{neumann P} becomes
\begin{eqnarray}
- \left( \frac{\partial ^2 \bar{u}_1}{\partial r^2} + \frac{1}{r}
\frac{\partial \bar{u}_1}{\partial r} +  \frac {1}{r^2} \frac
{\partial ^2 \bar{u}_1}{\partial \omega ^2} \right) + \frac{\partial
\bar{P} }{\partial r}\cos{\omega} -\frac{1}{r}\frac{\partial
\bar{P}}{\partial \omega} \sin \omega
&=&
\bar{f}_1
\quad \textmd{in} \quad \bar{S},
\label{eq20e}
\\
- \left( \frac{\partial ^2 \bar{u}_2}{\partial r^2} + \frac{1}{r}
\frac{\partial \bar{u}_2}{\partial r} +  \frac {1}{r^2} \frac
{\partial ^2 \bar{u}_2}{\partial \omega ^2} \right) + \frac{\partial
\bar{P} }{\partial r}\sin{\omega} + \frac{1}{r}\frac{\partial
\bar{P}}{\partial \omega} \cos \omega
&=&
\bar{f}_2
 \quad \textmd{in} \quad \bar{S},
\label{eq20f}
\\
- \left( \frac{\partial ^2 \bar{u}_3}{\partial r^2} + \frac{1}{r}
\frac{\partial \bar{u}_3}{\partial r} +  \frac {1}{r^2} \frac
{\partial ^2 \bar{u}_3}{\partial \omega ^2} \right)
&=&
\bar{f}_3
 \quad \textmd{in} \quad \bar{S},
\label{eq20g}
\\
\frac{\partial \bar{u}_1}{\partial r}  \cos \omega - \frac{1}{r}
\frac {\partial \bar{u}_1}{\partial \omega} \sin \omega +
\frac{\partial \bar{u}_2}{\partial r}\sin \omega + \frac{1}{r} \frac
{\partial \bar{u}_2}{\partial \omega} \cos \omega
&=&
\bar{g}
 \quad \textmd{in} \quad \bar{S},
\label{eq20h}
\\
\bar{u}_1(r,\pi/2) &=& 0,
\\
\bar{u}_2(r,\pi/2) &=& 0,
\\
\bar{u}_3(r,\pi/2) &=& 0,
\\
\frac{\partial \bar{u}_1}{\partial \omega} (r,0)
&=&
0,
\label{eq25a}
\\
-\bar{P}(r,0)+\frac{\partial\bar{u}_2}{\partial\omega}(r,0)
&=&
0,
\label{eq25b}
\\
\frac{\partial\bar{u}_3}{\partial\omega}(r,0)
&=&
0,
\label{eq25c}
\end{eqnarray}
where $\bar{S}=\left\{ (r,\omega):
0<r<\infty, \, 0<\omega<{\pi / 2} \right\}$
is the infinite plane angle
described in polar coordinates $(r,\omega)$,
$\bar{\bfu}(r,\omega)=\hat{\bfu}(x_1,x_2)$,
$\bar{P}(r,\omega)=\sqrt{x_1^2+x_2^2}\hat{P}(x_1,x_2)$,
$\bar{\bff}(r,\omega)=\hat{\bff}(x_1,x_2)$,
$\bar{g}(r,\omega)=\hat{g}(x_1,x_2)$.

Applying the Mellin transform
$$
\mathcal{M}[\phi]
=
\int_0^{\infty}
r^{-z-1} \phi(r){\rm d}r = \breve{\phi}(z)
$$
we get the
following system of equations depending
on a parameter $z \in \mathbb{C}$  in the interval $(0 , \pi/2)$
\begin{eqnarray}
- \frac {\partial ^2 \breve{u}_1}{\partial \omega ^2} -
z^2\breve{u}_1
+
(z-1)\breve{P}\cos \omega-\frac{\partial \breve{P}}{\partial
\omega} \sin \omega
&=&
\breve{f}_1,
\label{eq22a}
\\
- \frac {\partial ^2\breve{u}_2}{\partial \omega ^2}
-z^2\breve{u}_2+(z-1)\breve{P}\sin\omega
+\frac{\partial\breve{P} }{\partial \omega} \cos \omega
&=&
\breve{f}_2,
\label{eq22b}
\\
- \frac {\partial ^2\breve{u}_3}{\partial \omega ^2}
-z^2\breve{u}_3
&=&
\breve{f}_3,
\label{eq22c}
\\
z\breve{u}_{1}\cos \omega - \frac {\partial
\breve{u}_1}{\partial \omega} \sin \omega
+
z\breve{u}_2\sin \omega
+ \frac{\partial\breve{u}_2 }{\partial\omega} \cos \omega
&=&
\breve{g},
\label{eq22d}
\\
\breve{u}_1(z,\pi/2)
&=&
0,
\label{eq25d}
\\
\breve{u}_2(z,\pi/2)
&=&
0,
\label{eq25e}
\\
\breve{u}_3(z,\pi/2)
&=&
0,
\label{eq25f}
\\
\frac{\partial \breve{u}_1}{\partial \omega} (z,0 )
&=&
0,
\label{eq25a}
\\
-\breve{P}(z,0)+\frac{\partial\breve{u}_2}{\partial\omega}(z,0)
&=&
0,
\label{eq25b}
\\
\frac{\partial \breve{u}_3}{\partial \omega}(z,0 )
&=&
0,
\label{eq25f}
\end{eqnarray}
where
$\breve{f}_1=\mathcal{M}_{r \rightarrow z}[\bar{f}_1]$,
$\breve{f}_2=\mathcal{M}_{r \rightarrow z}[\bar{f}_2]$,
$\breve{f}_3=\mathcal{M}_{r \rightarrow z}[\bar{f}_3]$,
$\breve{g}=\mathcal{M}_{r \rightarrow z}[\bar{g}]$,
$\breve{u}_1=\mathcal{M}_{r \rightarrow z}[\bar{u}_1]$,
$\breve{u}_2=\mathcal{M}_{r \rightarrow z}[\bar{u}_2]$,
$\breve{u}_3=\mathcal{M}_{r \rightarrow z}[\bar{u}_3]$,
$\breve{P}=\mathcal{M}_{r \rightarrow z}[\bar{P}]$.

Now the problem \eqref{eq22a}--\eqref{eq25f} can be treated as
the operator equation
\begin{displaymath}
\mathcal{A}(z)
[\breve{u}_1,\breve{u}_2,\breve{u}_3,\breve{P}]
=
[\breve{f}_1,\breve{f}_2,\breve{f}_3,\breve{g},0,0,0,0,0,0].
\end{displaymath}

Every complex number $z_0$
such that $\ker \mathcal{A}(z_0)\neq {\bf0}$ is said to be an eigenvalue
of $\mathcal{A}(z)$ and the set of all such eigenvalues
is called the spectrum of $\mathcal{A}(z)$.
Note that the problem \eqref{eq22a}--\eqref{eq25f}
with $[\breve{f}_1,\breve{f}_2,\breve{f}_3,\breve{g}]=[0,0,0,0]$ consists of two
decoupled boundary value problems with parameter $z$ which can be handled
separately. The spectrum of $\mathcal{A}(z)$ consists of the numbers
$z_k = 2k+1$, where $k=0,1,2,\dots$,
corresponding to the problem \eqref{eq22c},\eqref{eq25c} and \eqref{eq25f}
with $\breve{f}_3=0$ and the only unknown $\breve{u}_3$
(see \cite[Section 8.3.1]{MazRoss2010}).
In addition, the spectrum of $\mathcal{A}(z)$ includes
the solutions of the transcendental equation
(we refer to \cite[eq. (2.9)]{Benes2009})
\begin{equation}\label{DN_determinant 1}
z^2-
4\cos^2\left(\frac{z\pi}{2}\right)-
\sin^2\left(\frac{z\pi}{2}\right)=0.
\end{equation}
Note that for the roots $z$ of \eqref{DN_determinant 1}
there exists a nontrivial solution of the problem
\eqref{eq22a}--\eqref{eq22b}, \eqref{eq22d}--\eqref{eq25b}, \eqref{eq25d}--\eqref{eq25e}
with
$[\breve{f}_1,\breve{f}_2,\breve{g}]=[0,0,0]$
and the
unknowns $\breve{u}_1$, $\breve{u}_2$ and $\breve{P}$ (see \cite{Benes2009}).

$$$$


Denote by
$\mu_{\mathcal{M}}$ the greatest real number such that the strip
\begin{displaymath}
0 < \Re z  < \mu_{\mathcal{M}}, \; z \in \mathbb{C},
\end{displaymath}
contains only the eigenvalue $z = 1$ of the operator $\mathcal{A}(z)$.

The following result is a consequence of \cite[Theorem 5.5]{MazRoss2007}.
\begin{thm}
[Regularity in weighted Sobolev spaces]
\label{regularity_stokes 1}
Let $\bff \in
 (\mathbf{V}_{\sigma,D}^{1,2})^*$
and
$\bfu\in \mathbf{V}_{\sigma,D}^{1,2}$ be the weak solution
of the problem \eqref{Stok 20}--\eqref{neumann S0}, i.e. satisfying the equation
\begin{equation*}
a(\bfu,\bfv) =(\bff,\bfv)
\end{equation*}
for all $\bfv \in \mathbf{V}_{\sigma,D}^{1,2}$.
Suppose that $\bff \in \mathcal{W}^{0,p}_{\bfdelta}(\Omega)^3$
and the components $\delta_i$ of $\bfdelta$ satisfy the inequalities
$$
\max(0,2-\mu_{\mathcal{M}})<\delta_i + 2/p < 2,
\qquad i=1,\dots,d.
$$
Then $\bfu \in \mathcal{W}^{2,p}_{\bfdelta}(\Omega)^3$ and
\begin{displaymath}
\|\bfu\|_{\mathcal{W}^{2,p}_{\bfdelta}(\Omega)^3}
\leq c(\Omega)
\|\bff\|_{\mathcal{W}^{0,p}_{\bfdelta}(\Omega)^3}.
\end{displaymath}
\end{thm}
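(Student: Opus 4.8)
The plan is to read off the weighted estimate from the abstract edge-regularity result \cite[Theorem 5.5]{MazRoss2007}, the whole point of the preceding reduction being to put the problem into a form to which that theorem applies. First I would cover $\overline{\Omega}$ by finitely many balls and choose a subordinate partition of unity, separating those balls centred on the edge set $\mathcal{M}$ from those whose closure misses $\mathcal{M}$. On the latter the relevant boundary piece is a single $C^{\infty}$ portion of either $\Gamma_D$ or $\Gamma_N$, so the classical $L^p$-theory for the Stokes system with a (pure) Dirichlet or do-nothing condition gives $\bfu\in W^{2,p}$ there; since each weight $r_i(\bfx)^{\delta_i}$ is smooth and bounded above and below on these pieces, their contribution to $\|\bfu\|_{\mathcal{W}^{2,p}_{\bfdelta}}$ is controlled by $\|\bff\|_{\mathcal{W}^{0,p}_{\bfdelta}}$ plus lower-order terms.

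Near a point $\bfx_0\in\mathcal{M}$ I would use the diffeomorphism onto the dihedral angle $\mathcal{D}$ and the cut-off $\chi$ to pass to the model problem \eqref{NavStok 2} for $[\tilde{\bfu},\tilde{P}]$, whose homogeneous version was transformed, via the Fourier transform in $x_3$ and the Mellin transform in $r$, into the operator pencil $\mathcal{A}(z)$. Its spectrum has just been identified as the values $z_k=2k+1$ coming from the scalar Neumann--Dirichlet problem for $\breve{u}_3$ together with the roots of \eqref{DN_determinant 1}. Theorem 5.5 of \cite{MazRoss2007} then supplies the weighted a priori estimate in $\mathcal{W}^{2,p}_{\bfdelta}(\mathcal{D})$ once one checks that, for each $i$, the line $\Re z = 2-(\delta_i+2/p)$ fixed by the weight lies in the admissible spectral gap. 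Writing $\beta_i=\delta_i+2/p$, the hypothesis $\max(0,2-\mu_{\mathcal{M}})<\beta_i<2$ is exactly equivalent to $0<2-\beta_i<\mu_{\mathcal{M}}$, so this line sits in the strip $0<\Re z<\mu_{\mathcal{M}}$, which by the definition of $\mu_{\mathcal{M}}$ contains no eigenvalue of $\mathcal{A}(z)$ other than $z=1$; this is precisely the admissibility required by the cited theorem.

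The step I expect to cost the most effort is the bookkeeping for the inhomogeneities introduced by the cut-off. The datum $\tilde{\bff}$ in \eqref{g} splits into the ``good'' piece $\bff\chi$, which inherits $\mathcal{W}^{0,p}_{\bfdelta}$-membership from the hypothesis on $\bff$, and the commutator pieces $\bfu\,\Delta\chi$, $(\partial_{x_j}\bfu)(\partial_{x_j}\chi)$ and $P\nabla\chi$, while $\tilde{g}=\bfu\cdot\nabla\chi$ is a genuinely nonzero divergence datum; all of these are supported in the annulus $r_0/2<|\bfx|<r_0$, where the weights are bounded and $\chi$ has nonvanishing derivatives. I would therefore run a short bootstrap: the energy bound in $\mathbf{V}_{\sigma,D}^{1,2}$ and interior regularity give $L^p$-control of $\bfu$, $\nabla\bfu$ and $P$ on that annulus, which places the commutator terms and $\tilde{g}$ in the required weighted spaces and lets Theorem 5.5 be applied to $[\tilde{\bfu},\tilde{P}]$. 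A secondary subtlety is the presence of the eigenvalue $z=1$ inside the admissible strip: it corresponds to the regular affine asymptotics rather than to a genuine corner singularity, and is accommodated by \cite[Theorem 5.5]{MazRoss2007} rather than obstructing the estimate. Finally I would sum the local estimates over the partition of unity, absorbing the lower-order norms of $\bfu$ by the energy estimate, to obtain $\|\bfu\|_{\mathcal{W}^{2,p}_{\bfdelta}(\Omega)^3}\le c(\Omega)\|\bff\|_{\mathcal{W}^{0,p}_{\bfdelta}(\Omega)^3}$.
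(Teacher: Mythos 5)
Your proposal is correct and follows essentially the same route as the paper: the paper obtains this theorem directly as a consequence of \cite[Theorem 5.5]{MazRoss2007}, after the reduction to the dihedral model problem and the identification of the pencil spectrum (hence of $\mu_{\mathcal{M}}$) that precedes the statement. The partition-of-unity localization, the commutator bookkeeping, and the verification that the line $\Re z = 2-(\delta_i+2/p)$ lies in the eigenvalue-free strip $0<\Re z<\mu_{\mathcal{M}}$ are exactly the ingredients the paper's citation absorbs, so you have simply made explicit what the paper delegates to the reference.
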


\begin{rem}
\label{reg_sup_1}
Let us note (see \cite[Remark 2.2]{Benes2009})
that it can be shown numerically
that there are only two roots of
the equation \eqref{DN_determinant 1},
$z_0\approx 1.352317 $ and $z_{00}=1$, in the strip
$\Re z  \in (0,2)$.
Hence $\mu_{\mathcal{M}} = \Re z_0$ and we set
$s_0=\frac{2}{\Re z_0+2} (\approx 3.087930)$.
Since we consider $4/3 < s < s_0$ (recall \eqref{s_r}), we have
$\max(0,2-\mu_{\mathcal{M}})<2/s<2$.
\end{rem}

For $\bfdelta = {\bf0}$, we obtain the regularity results in nonweighted Sobolev spaces.
The following assertion holds as a consequence of Theorem \ref{regularity_stokes 1}
and Remark \ref{reg_sup_1}.

\begin{cor}
Let
$\bfu\in \mathbf{V}_{\sigma,D}^{1,2}$ be the weak solution
of the problem \eqref{Stok 20}--\eqref{neumann S0} and $\bff \in L^s(\Omega)^3$,
 $4/3 < s < s_0$.
 Then $\bfu \in W^{2,s}(\Omega)^3$ and
\begin{displaymath}
\|\bfu\|_{W^{2,s}(\Omega)^3}
\leq c(\Omega)
\, \|\bff\|_{L^s(\Omega)^3}.
\end{displaymath}
\end{cor}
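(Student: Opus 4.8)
The plan is to obtain the corollary as the special case $\bfdelta = \mathbf{0}$ of Theorem \ref{regularity_stokes 1}. First I would observe that when all weight exponents vanish, the weighted norm collapses to the ordinary Sobolev norm: for $\bfdelta = \mathbf{0}$ each factor $r_i(\bfx)^{p\delta_i} = r_i(\bfx)^0 = 1$, so that $\mathcal{W}^{k,p}_{\mathbf{0}}(\Omega) = W^{k,p}(\Omega)$ with identical norms. In particular the hypothesis $\bff \in L^s(\Omega)^3$ is exactly $\bff \in \mathcal{W}^{0,s}_{\mathbf{0}}(\Omega)^3$, and the desired conclusion $\bfu \in W^{2,s}(\Omega)^3$ together with its estimate is precisely the statement of Theorem \ref{regularity_stokes 1} read with $p = s$ and $\bfdelta = \mathbf{0}$.

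The only thing that must be checked is that the admissibility condition of the theorem, namely $\max(0, 2 - \mu_{\mathcal{M}}) < \delta_i + 2/p < 2$ for every $i$, is satisfied for this choice. Setting $\delta_i = 0$ and $p = s$ reduces it to $\max(0, 2 - \mu_{\mathcal{M}}) < 2/s < 2$. The right inequality $2/s < 2$ is equivalent to $s > 1$, which holds because $s > 4/3$. For the left inequality I would invoke Remark \ref{reg_sup_1}, which identifies $\mu_{\mathcal{M}} = \Re z_0$ and fixes $s_0$ precisely so that the range $4/3 < s < s_0$ coincides with the set on which $2 - \mu_{\mathcal{M}} < 2/s$ holds; thus the restriction $s < s_0$ in the hypothesis is exactly what guarantees the left inequality. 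Indeed the closing line of Remark \ref{reg_sup_1} states this verification outright.

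Having confirmed the single admissibility condition, I would simply apply Theorem \ref{regularity_stokes 1} to the weak solution $\bfu \in \mathbf{V}_{\sigma,D}^{1,2}$ of \eqref{Stok 20}--\eqref{neumann S0}, obtaining $\bfu \in \mathcal{W}^{2,s}_{\mathbf{0}}(\Omega)^3 = W^{2,s}(\Omega)^3$ and the bound $\|\bfu\|_{W^{2,s}(\Omega)^3} \leq c(\Omega)\|\bff\|_{L^s(\Omega)^3}$, which is the assertion. There is essentially no analytic obstacle here, since all the substantive work is contained in Theorem \ref{regularity_stokes 1}; the one point requiring care is the bookkeeping of the admissibility interval, that is, confirming through Remark \ref{reg_sup_1} that $s_0$ is defined so that $4/3 < s < s_0$ translates into $\max(0, 2 - \mu_{\mathcal{M}}) < 2/s < 2$ in the nonweighted setting $\bfdelta = \mathbf{0}$.
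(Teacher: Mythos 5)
Your proposal is correct and follows exactly the paper's own route: the corollary is obtained by specializing Theorem \ref{regularity_stokes 1} to $\bfdelta=\mathbf{0}$, $p=s$, with the admissibility condition $\max(0,2-\mu_{\mathcal{M}})<2/s<2$ supplied by Remark \ref{reg_sup_1} (which defines $s_0$ precisely so that $4/3<s<s_0$ guarantees it). Nothing is missing; the verification of the nonweighted collapse $\mathcal{W}^{k,s}_{\mathbf{0}}(\Omega)=W^{k,s}(\Omega)$ and of the two inequalities is exactly the bookkeeping the paper intends.
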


\subsection{The mixed problem for the Poisson equation}
We consider the problem
\begin{align}
-  \Delta \vartheta
&=
h
&&
\textmd{in}\;\om,
\label{poisson_01}
\\
\vartheta
&=
0
&&\textmd{on}\;\Gamma_{D},
\label{dirichlet_theta_01}
\\
\nabla\vartheta \cdot \bfn &= 0
&&\textmd{on}\;\Gamma_{N}.
\label{neumann_theta_01}
\end{align}

By the Lax-Milgram theorem,
for every $h \in (V_D^{1,2})^*$ there
exists a uniquely determined $\vartheta \in V_D^{1,2}$ (the weak solution to the problem
\eqref{poisson_01}--\eqref{neumann_theta_01}) such
that $\kappa(\vartheta,\varphi) = \langle h ,
\varphi \rangle$ for every $\varphi \in V_D^{1,2}$.
The following regularity result
 is a consequence of \cite[Section 33.5 and Section 26.3]{KufSan}
 (see also \cite[Corollary 8.3.2]{MazRoss2010}).
\begin{prop}
Let $h \in L^p(\Omega)\cap (V_D^{1,2})^*$, $p>1$,
and $\vartheta \in V_D^{1,2}$ be the weak solution to the problem
\eqref{poisson_01}--\eqref{neumann_theta_01}.
Then $\vartheta \in W^{2,p}(\Omega)$ and
\begin{displaymath}
\|\vartheta\|_{W^{2,p}(\Omega)}
\leq c(\Omega)
\, \|h\|_{L^p(\Omega)}.
\end{displaymath}
\end{prop}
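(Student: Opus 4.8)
The plan is to follow the same localization-and-reduction strategy used for the Stokes system, but for the simpler scalar equation, exploiting that the relevant operator pencil has already been diagonalized in this appendix. Existence and uniqueness of the weak solution $\vartheta\in V_D^{1,2}$ is already guaranteed by the Lax--Milgram argument recorded just above the statement, so only the lift to $W^{2,p}(\Omega)$ together with the a~priori bound has to be established. First I would cover $\overline{\Omega}$ by finitely many balls of three types: those meeting neither $\Gamma_D$ nor $\Gamma_N$, those centred on the smooth interior of $\Gamma_D$, and those centred on the smooth interior of $\Gamma_N$. On each such patch the problem \eqref{poisson_01}--\eqref{neumann_theta_01} is, respectively, an interior Poisson problem, a homogeneous Dirichlet problem and a homogeneous Neumann problem on a $C^\infty$ boundary piece, and classical Agmon--Douglis--Nirenberg $L^p$-theory furnishes local $W^{2,p}$ estimates by $\|h\|_{L^p}$ for every $p\in(1,\infty)$. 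The whole difficulty is therefore concentrated on neighbourhoods of the edge set $\mathcal{M}=\bigcup_i\mathcal{M}_i$.

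Near a point $\bfx_0\in\mathcal{M}_i$ I would repeat the construction carried out for the Stokes system: multiply by the cut-off $\chi$, flatten $\Omega$ to the dihedral angle $\mathcal{D}$ of opening $\pi/2$, apply the Fourier transform along the edge direction $x_3$ and then the Mellin transform in the radial variable. For the scalar Laplacian this produces exactly the decoupled problem in $\breve{u}_3$, namely \eqref{eq22c} together with the mixed conditions $\breve{u}_3(z,\pi/2)=0$ and $\partial_\omega\breve{u}_3(z,0)=0$, whose operator pencil is the scalar block of $\mathcal{A}(z)$. As already recorded in the text, its spectrum is
\begin{displaymath}
\{\,z_k = 2k+1 \;:\; k=0,1,2,\dots\,\}=\{1,3,5,\dots\},
\end{displaymath}
with angular eigenfunctions $\cos\big((2k+1)\omega\big)$. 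Each singular exponent is an odd positive integer, and the corresponding term $r^{2k+1}\cos\big((2k+1)\omega\big)=\Re\big[(x_1+\mathrm{i}x_2)^{2k+1}\big]$ is a harmonic polynomial, hence globally smooth.

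The decisive consequence is that the analogue of $\mu_{\mathcal{M}}$ for this scalar pencil equals $3$: the smallest eigenvalue is $z=1$ and the next one is $z=3$, so the strip $0<\Re z<3$ contains only $z=1$. I would then invoke the weighted regularity theorem of Maz'ya--Rossmann (the scalar counterpart of Theorem \ref{regularity_stokes 1}, also covered by \cite[Section 33.5 and Section 26.3]{KufSan} and \cite[Corollary 8.3.2]{MazRoss2010}) with the trivial weight $\bfdelta=\bfzero$. Its hypothesis $\max(0,2-\mu_{\mathcal{M}})<2/p<2$ here reduces, since $\mu_{\mathcal{M}}=3\ge 2$, to $0<2/p$ (automatic) and $2/p<2$, i.e.\ merely $p>1$. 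In contrast to the Stokes case of Remark \ref{reg_sup_1}, where the non-integer root $z_0\approx1.352<2$ forced the upper restriction $s<s_0$, no upper bound on $p$ survives. Thus $\chi\vartheta\in W^{2,p}$ near each edge. Reassembling the interior, boundary and edge estimates through a partition of unity subordinate to the cover gives $\vartheta\in W^{2,p}(\Omega)$; the commutator terms generated by $\nabla\chi$ and $\Delta\chi$ are of lower order and are absorbed using that the homogeneous problem ($h=0$) admits only $\vartheta=0$, by coercivity of $\kappa$ and $\Gamma_D\neq\emptyset$, which removes any lower-order norm and yields the clean bound $\|\vartheta\|_{W^{2,p}(\Omega)}\le c(\Omega)\|h\|_{L^p(\Omega)}$.

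The main obstacle, exactly as in the Stokes analysis, is purely the edge behaviour, and it is dissolved by the spectral picture: the opening $\pi/2$ together with the mixed Dirichlet--Neumann pair forces every singular exponent to be an odd integer carrying a polynomial (smooth) eigenfunction, so no genuine corner singularity can arise for \emph{any} $p>1$. Conceptually this is the reflection principle --- even extension across the Neumann face $\Gamma_N$ and odd extension across the Dirichlet face $\Gamma_D$ match up precisely because four copies of a right angle tile the cross-sectional plane --- and I would cite it as the reason behind the favourable spectrum, while basing the rigorous estimate on the Kondrat'ev--Maz'ya--Rossmann machinery quoted above.
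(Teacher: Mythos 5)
Your proof is correct, but it is worth pointing out that the paper itself carries out none of it: the paper's entire ``proof'' of this Proposition is the sentence preceding it, which cites \cite[Sections 33.5 and 26.3]{KufSan} (see also \cite[Corollary 8.3.2]{MazRoss2010}). What you have written is a reconstruction of the argument underlying those citations, assembled from ingredients the paper already prepared for the Stokes system in the Appendix: the scalar block of the pencil $\mathcal{A}(z)$ --- the $\breve{u}_3$ problem \eqref{eq22c} with Dirichlet condition at $\omega=\pi/2$ and Neumann condition at $\omega=0$ --- is recorded there to have spectrum $\{2k+1:\ k=0,1,2,\dots\}$, and your deduction that the scalar analogue of $\mu_{\mathcal{M}}$ therefore equals $3$, so that the admissibility window $\max(0,2-\mu_{\mathcal{M}})<2/p<2$ of Theorem \ref{regularity_stokes 1} collapses to just $p>1$, is exactly the structural reason the Proposition holds for every $p>1$ while the Stokes corollary is capped at $s<s_0$ by the non-integer root $z_0\approx 1.352$ of \eqref{DN_determinant 1}. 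Your route buys self-containedness and makes this asymmetry transparent (including the nice reflection-principle heuristic); the paper's citation buys brevity and also quietly covers the one step your sketch treats too lightly, namely the final absorption of the lower-order cut-off terms into the clean bound $\|\vartheta\|_{W^{2,p}(\Omega)}\le c(\Omega)\|h\|_{L^p(\Omega)}$. For $1<p<6/5$ one has $L^p(\Omega)\not\hookrightarrow (V_D^{1,2})^*$ (which is precisely why the hypothesis $h\in L^p(\Omega)\cap(V_D^{1,2})^*$ is imposed), so the Lax--Milgram estimate cannot be used to dominate the lower-order norms by $\|h\|_{L^p(\Omega)}$; moreover, for such $p$ a $W^{2,p}$ solution of the homogeneous problem need not lie in $V_D^{1,2}$, so uniqueness in the $W^{2,p}$ class does not follow from coercivity of $\kappa$ alone but from the Fredholm theory of the pencil in the weighted-space framework. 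This is a standard repair supplied by the cited references, not a conceptual gap in your argument.
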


}


\subsection*{Acknowledgment}
This research was supported by
the project GA\v{C}R 13-18652S.

\end{document}